\newtheorem{theorem}{Theorem}
\DeclareMathOperator{\Tr}{Tr}
\title{To wedge or not to wedge}
\author{Vic Vander Linden}
\date{June~2023}
\begin{document}
\maketitle%

\section*{Abstract}
The AdS/CFT correspondence is an explicit realization of the holographic principle relating a theory of gravity in a volume of space to a lower dimensional quantum field theory on its boundary. By exploiting elements of quantum error correction, qubit toy models of this correspondence have been constructed for which the bulk logical operators are representable by operators acting on the boundary \cite{Happy}. Given a boundary subregion, wedges in the volume space are used to enclose the bulk qubits for which logical operators are reconstructable on that boundary subregion. In this thesis a number of different wedges, such as the causal wedge, greedy entanglement wedge and minimum entanglement wedge, are examined. More specifically, Monte-Carlo simulations of boundary erasure are performed with various toy models to study the differences between wedges and the effect on these wedge by the type of the model, non-uniform boundaries and stacking of models. It has been found that the minimum entanglement wedge is the best approximate for the true geometric wedge. This is illustrated by an example toy model for which an operator beyond the greedy entanglement wedge was also reconstructed. In addition, by calculating the entropy of these subregions, the viability of a mutual information wedge is rejected. Only for particular connected boundary subregions was the inclusion of the central tensor by the geometric wedge associated to a rise in mutual information.
\newpage

\tableofcontents%

\newpage 
\section{Introduction}%
In 1981, Richard Feynman jokingly called to close the doors in an attempt to trap his attendees at the first Physics and Computation conference. He did so as he revealed the only way to truly simulate physics might be through quantum mechanics \cite{Feynman}. His ideas about an inherent probabilistic computer helped to form and popularize a new branch of physics, quantum information. \\ \\
In this thesis, a brief introduction to quantum information, quantum error correction, tensors and tensor networks and the AdS/CFT correspondence will be given. For this purpose, a literary study of the following works was conducted \cite{Happy} - \cite{google}. Thereafter, wedges and operator reconstructability for qubit toy models of this AdS/CFT correspondence are investigated using Monte-Carlo simulations of central qubit inclusion with boundary erasure, inspired by the similar simulations performed by Pastawski et al. in \cite{Happy}. These original simulations are recreated in Section \ref{sectionwed}. In addition, further Monte-Carlo simulations are carried out to investigate other properties of tensor networks such as a different shape of the network and the effect of a non-uniform boundary. Additionally, the effect of stacking tensor networks, as introduced by \cite{stack}, is investigated using the same simulations. Furthermore, an additional wedge will be proposed, the minimal entanglement wedge. The Monte-Carlo simulations using this wedge will depend on the state of the bulk qubits. Its particular behaviour for single states will be explained using a theorem constructed and proven in this thesis. This minimal entanglement wedge will be shown to be useful in approximating the true geometric entanglement wedge of toy tensor networks. Finally, the relationship between these wedges and operator reconstructability of bulk qubits will be investigated using newly constructed examples.\\ \\
In the third chapter of this thesis the viability of a mutual information wedge will be determined. The connection between inclusion in the geometric wedge and the rise in mutual information will be investigated for both connected and non connected boundary subregions. Additionally, the effect of different bulk states, tensor networks and layer depth on the mutual information will be examined. Furthermore, the computation of both the Von Neumann- and Ryu-Takayanagi entropy will be compared for a growing boundary subregion of one explicit example.
\subsection{Qubits}
\label{qubit}
The foundation of modern quantum information theory lies in the qubit. This term, first coined by Schumacher in 1995 \cite{qubit}, represents a system in a superposition of two possible energy levels which are denoted by $\ket{0}$ and $\ket{1}$,
\begin{equation}
\ket{\Psi} = \alpha \ket{0} + \beta \ket{1} = \alpha \begin{pmatrix} 1 \\ 0 \end{pmatrix} + \beta \begin{pmatrix} 0 \\ 1 \end{pmatrix},
\label{general}
\end{equation}
where $\ket{\Psi}$ represents the state of the system and $\alpha, \beta \in \mathbb{C}$ with $|\alpha|^2 + |\beta|^2 = 1$. \\ \\
These qubits are isomorphic to spin-$\frac{1}{2}$ systems, spanning a two dimensional Hilbert space $\mathcal{H}_2$. In recent quantum computers, such as Google's 53 qubit and IBM's 433 qubit quantum processor, qubits are commonly approximated by letting current flow through a superconducting material \cite{Google, IBM}.  Qubits have a property that regular bits do not possess, entanglement. As an example, consider an Einstein–Podolsky–Rosen (EPR) pair or Bell state of two qubits \cite{EPR},
\begin{equation}
    \ket{\Psi} = \frac{\ket{00} + \ket{11}}{\sqrt{2}},
\end{equation}
where $\ket{00}$ and $\ket{11}$ are the tensor products $\ket{0} \otimes \ket{0}$ and $\ket{1} \otimes \ket{1}$ respectively. \\ \\
Suppose one wishes to describe only one of these qubits independently. This can be done by \textit{tracing out} the other qubit. This mathematical procedure is conducted by performing a partial trace of the density matrix $\rho$. This density matrix is defined for a quantum system as $ \rho = \sum_{j}{p_j \ket{\Psi_j}\bra{\Psi_j}}$, where $j$ sums over a basis of pure states for the system with their probability $p_j$. As an example the density matrix of the EPR pair is represented by,
\begin{equation}
 \rho = 1 \cdot  \left(\frac{\ket{00} + \ket{11}}{\sqrt{2}}\right) \left(\frac{\bra{00} + \bra{11}}{\sqrt{2}}\right) = \begin{pmatrix}
1/2 & 0 & 0 & 1/2 \\
0 & 0 & 0 & 0 \\
0 & 0 & 0 & 0 \\
1/2 & 0 & 0 & 1/2 \\
\end{pmatrix}.
\end{equation}
where the matrix is the result of the tensor product of the individual Hilbert spaces. This density matrix represents a pure state because for the normalized density matrix the trace of the square, $\Tr(\rho^2)$, is equal to one \cite{griffiths}. Tracing out a subsystem $A$ of a density matrix then involves taking the partial trace of subsystem B defined as \cite{trace},
\begin{equation}
\Tr_B(\rho_{AB}) = \sum_j{(I_A \otimes \bra{j}_B) \rho_{AB} (I_A \otimes \ket{j}_B)}
\end{equation}
where $\set{\ket{j}_B}$ is any orthonormal basis for the Hilbert space $\mathcal{H}_B$ of subsystem B and $I_A$ is the identity operator of the remaining Hilbert space $\mathcal{H}_A$. As an example, tracing out one of the qubits of the EPR pair, will yield for the remaining qubit the mixed state with density matrix,
 \begin{equation}
     \rho_A = \Tr_B(\rho_{AB}) = \frac{1}{2}(\ket{0_A}\bra{0_A} + \ket{1_A}\bra{1_A}) = \begin{pmatrix}
         1/2 & 0 \\
         0 & 1/2 \\
     \end{pmatrix}.\label{mix}
 \end{equation}
This procedure will not result a pure linear superposition as defined in Formula (\ref{general}), but a mixed state made up of an ensemble of the states $ \set{\bra{0_A},\bra{1_A}}$ with a probability of $\frac{1}{2}$ each. Since it is a mixed state, $\Tr(\rho^2)$ will no longer be equal to one. \\ \\ In addition, calculating the Von Neumann entropy defined as,
 \begin{equation}
     S_A = -  \Tr( \rho_A \cdot \log(\rho_A)),
     \label{entr}
 \end{equation}
for a subsystem A, yields information about the degree of entanglement between A and B. In this case the entropy is $\log(2)$, which is the largest possible value for a two qubit system. In this thesis, as is often the case, will the entropy of qubits be given in terms of $\log(2)$, so the entropy would be $S_A$ = 1. This value highlights the fact that measuring the state of A will also yield knowledge of the state of B, causing qubit B to collapse to that measured energy level. In other words, the two qubits of an EPR pair are maximally entangled. In this example it holds true that the entanglement entropy of qubit A is equal to the entanglement entropy of qubit B. This property can be generalized to any pure quantum state partitioned into $A$ and $A^c$ and yields $S_{A} = S_{A^c}$. Lastly, the mutual information $I_{AB}$ between subsystems $A$ and $B$ is defined as $I_{AB} = S_A$ + $S_B$ - $S_{AB}$. Because the sum of entropy of the individual systems will always be larger or equal to the entropy of the combination, the mutual information will be larger or equal to zero. \\ \\ 
By exploiting the entanglement between qubits it is possible to  perform computations that classical Turing Machines are not able to do efficiently. Such an example is simulating inherent quantum systems, like the output of a pseudo-random quantum circuit as shown by Google's quantum processor \cite{Google}. Another computational task is the factoring of large numbers into two prime numbers, a task commonly used in the security of online exchanges. For this purpose an efficient quantum algorithm was advanced by P. Shor in 1995 \cite{Shor}.
\subsection{Quantum Error Correction}
\label{try}
 \label{error cor}
The main limitation of quantum processors is their vulnerability to noise. Such noise is usually qubit interacting with its environment. Examples of this include noisy quantum gates or spontaneous emission of the qubit \cite{Got}. In general, quantum errors can be thought of as the qubit transforming to a mixed state entangled with its environment. Today, this noise is still significant in quantum processors. For this reason quantum computing is still classified to be in its noisy intermediate-scale quantum stage or NISQ. One might think to solve this problem with the same error correction framework as for classical computers. Unfortunately, quantum theory prevents this. Specifically, the state of a qubit cannot be cloned. Furthermore, any attempt to measure the state of a qubit will destroy the quantum information contained in that qubit. Lastly, as opposed to discrete errors in regular qubits, quantum errors are continuous  \cite{Nielsen}. As an example, in Formula (\ref{general}) the parameter $\alpha$ might change by an infinitesimal amount. Nevertheless, quantum error correcting codes have been developed. \\ \\
These errors are described using quantum noise operations $\varepsilon (\rho)$ that act on the density matrix $\rho$ of the system. All trace-preserving quantum noise operations $\varepsilon (\rho)$ associated with a closed system, may be approximated with error-operation elements acting on a single qubit, $E_i$, which can then be expressed as a linear combination \cite{Nielsen}, 
\begin{equation}
    E_i = e_{i0} I+  e_{i1} X + e_{i2} Z + e_{i3} X Z,
\end{equation}
where $X$ and $Z$ are the Pauli operators acting on a single qubit. The combination $XZ$ represents the operators acting in right to left order on the qubit. These Pauli operators are,
\begin{equation}
X = \begin{pmatrix}
0 & 1\\
1 & 0 
\end{pmatrix}, \ Y = \begin{pmatrix}
0 & -i\\
i & 0 \end{pmatrix},\ Z = \begin{pmatrix}
1 & 0\\
0 & -1 \end{pmatrix},
\label{paul}
\end{equation}
where the operators $X$ and $Z$ will represent a bit flip and phase shift respectively and $Y = iXZ$. \\
The constants $e_{i0},e_{i1},e_{i2},e_{i3} \in \mathbb{C}$ are sufficient to describe any single operation element $E_i$. These will be able to describe any arbitrary error acting on a single qubit. As an example, \textit{amplitude damping} might occur in a qubit interacting with its environment. This phenomenon can be modeled with a constant $\gamma \in [0,1] $ existing in two operation elements \cite{Nielsen},
\begin{equation}
    E_0 = \begin{pmatrix} 1 & 0 \\
    0 & \sqrt{1-\gamma} \end{pmatrix}, \ E_1 = \begin{pmatrix} 0 & \sqrt{\gamma} \\
    0 & 0 \end{pmatrix}
\end{equation}
acting on a density matrix $\rho$,
\begin{equation}
\varepsilon(\rho) = E_0 \rho E_0^\dag + E_1\rho E_1^\dag.
\label{operation}
\end{equation}
It is crucial to note that two assumptions have already been made. Firstly, it is assumed that the error operation is trace preserving, which corresponds to a closed system. The operation described in Formula (\ref{operation}) is indeed trace-preserving for all density matrices. However, measurements are an example of operations which can be described by a set of single qubit operation elements $E_i$ that are generally not trace-preserving. The second assumption made with this description, is that correlated error operations will not occur on multiple qubits at once. This is justified by the reasoning that if the chance of an error occurring on one qubit is small $\mathcal{O}(\epsilon)$, the chance of a correlated error acting on different qubits at the same time, is even smaller $\mathcal{O}(\epsilon^2)$. \\ The general framework of quantum error correction consists of performing syndrome measurements to detect which error occurred without destroying the quantum information, followed by the corresponding correction operation. The combination of both of these is usually denoted by $R$. A powerful theorem in quantum error correction states that when an error correcting code yields an error correcting procedure $R$ which allows to recover from a trace-preserving noise operation $\varepsilon$ with operation elements $\set{E_i}$, that $R$ will also be able correct the effects of any noise model $\mathcal{F}$ for which its operation elements $\set{F_i}$ are linear combinations of $\set{E_i}$ \cite{Nielsen}.
Consequently, since any operation element is always a linear combination of $X$, $Z$ and $XZ = -iY$, an additional third assumption is justified. The errors are assumed to be only the Pauli matrices X,Z and Y denoted in Equations (\ref{paul}). With these assumptions, one can describe the quantum error correcting codes using the stabilizer formalism.
\subsubsection*{Stabilizer Codes}
 The stabilizer formalism, eloquently formulated by D. Göttesman \cite{Got}, has it roots in group theory. Firstly, analogous to classical linear codes, a quantum code C has three properties $[n,k,d]$, where $n$ will represent the number of encoding or noisy qubits and $k$ the number of error corrected logical qubits. A distance $d$ will be the smallest amount of noisy qubits an error needs to act on, non-trivially, to change one logical state into another. \\
 Suppose now one wishes to use $n$ noisy qubits to compute $k$ fault-tolerant qubits. As previously explained, protecting against the basis of the Pauli $X$, $Y$ and $Z$ matrices and the identity acting on every single qubit, will be enough to also protect from linear combinations of those. This leads to the introduction of the set,
 \begin{equation}
 \mathcal{G} = \set{ \pm I, \pm i I, \pm X, \pm i X,\pm Y, \pm i Y,\pm Z, \pm i Z},
\end{equation}
which will form a group, the Pauli group $\mathcal{G}$, under multiplication. The tensor product of $n$ Pauli groups acting on $n$ qubits is denoted $\mathcal{G}_n$. In this group each element will either commute or anti-commute with each other. A stabilizer group representing an error correcting code is defined as a subgroup $S$ of $\mathcal{G}_n$ that does not contain the negative identity $- (I \otimes I ... \otimes I) $. The stabilizer group defines an action on the $2^n$ dimensional Hilbert space of noisy qubits $\mathcal{H}^{\otimes n}_2$,
\begin{equation}
    S \times \mathcal{H}^{\otimes n}_2 \to \mathcal{H}^{\otimes n}_2, (M,\ket{\psi}) \to M \cdot \ket{\psi}.
\end{equation}
The subspace of the Hilbert space $\mathcal{H}^{\otimes n}_2$ made up of states which stabilizes under $S$, defined as,
\begin{equation}
    P = \set{\ket{\psi} \in \mathcal{H}^{\otimes n}_2 | M \ket{\psi} =\ket{\psi},  \forall M \in S},
\end{equation}
will subsequently span the $2^k$ dimensional Hilbert space of fault-tolerant qubits. These are also called logical qubits \cite{Got}. Lastly, it is possible to write the stabilizer code $S$ more compactly using generators. S will then be the smallest group that contains those generator elements.
\subsubsection*{The Five-Qubit Code}
The five-qubit code is an example of a stabilizer code with properties [5,1,3], meaning it will encode five noisy qubits into one logical qubit. The distance three is the minimum amount of noisy qubits for which an error needs to act on that will transform one logical state into another, which in this case of one logical qubit is flipping the logical qubit.
\begin{table}[h!]
    \centering
    \begin{tabular}{c|c}
        $M_1$ & $X \otimes Z \otimes Z \otimes X \otimes I$ \\
        $M_2$ & $I \otimes X \otimes Z \otimes Z \otimes X$ \\
        $M_3$ & $X \otimes I \otimes X \otimes Z \otimes Z$ \\
        $M_4$ & $Z \otimes X \otimes I \otimes X \otimes Z$ \\
    \hline
    \end{tabular}
    \caption{The generators of the stabilizer group of the five-qubit code \cite{Got}.}
    \label{1}
\end{table}
\newpage \noindent
The generators of the stabilizer group representing this code are given in Table \ref{1}. By construction all generators in this table will leave the code subspace $P$ unchanged. The five-qubit code constructs this code subspace $P$ of the logical qubit as \cite{Got}, \\ \\
\indent $\overline{\ket{0}} = \frac{1}{4} (\ket{00000} + \ket{10010} + \ket{01001} + \ket{10100} + \ket{01010} - \ket{11011} - \ket{00110} - \ket{11000} $ \\ \indent \indent
    $ - \ket{11101} - \ket{00011} - \ket{11110} - \ket{01111}  - \ket{10001} - \ket{01100} - \ket{10111} + \ket{00101} )$,
    \newline\newline
    \indent
     $\overline{\ket{1}} = \frac{1}{4} (\ket{11111} + \ket{01101} + \ket{10110} + \ket{01011} + \ket{10101} - \ket{00100} - \ket{11001}  - \ket{00111}$ \\ \indent \indent $ - \ket{00010} - \ket{11100} - \ket{00001} - \ket{10000} - \ket{01110} - \ket{10011} - \ket{01000} + \ket{11010} )$, \\ \\ 
\noindent
where $\overline{\ket{0}}$ and $\overline{\ket{1}}$ represent the two states of the logical qubit.
The five-qubit code has an additional property that makes it interesting. It is the qubit code with the smallest amount of noisy qubits that is classified as perfect \cite{Got}. This means that all possible error measurements correspond to a correctable error. These error measurements are operators that either commute or anti-commute with all stabilizer elements. For the five-qubit code these are all possible single qubit errors $X_{i}$, $Y_{i}$ and $Z_{i}$. This means that, while the five-qubit code will protect against all single qubit errors, it will not protect against errors acting on multiple qubits at a time. As shown in Section \ref{massage}, the holographic state associated with this code is defined as $\frac{\ket{0 \overline{0}} + \ket{1 \overline{1}}}{\sqrt{2}}$.
\subsection{Tensors}
\label{massage}
Quantum error correcting codes can be represented as a tensor $T$ mapping a collection of $k$ logical qubits to a bigger collection of $n$ noise-susceptible qubits,
\begin{equation}
    T: \mathcal{H}^{\otimes k}_2 \to \mathcal{H}^{\otimes n}_2, \quad \ket{q_1, q_2, ...., q_k} \to \sum^{1}_{p_1,p_2, ...., p_n = 0} T_{p_1,p_2, ...., p_n}^{q_1, q_2, ...., q_k} \ket{p_1,p_2, ...,p_n},
    \label{tensor}
\end{equation}
where $\ket{q_1, q_2, ...., q_k}$ represents a logical state and $\ket{p_1,p_2, ...,p_n}$ a state of noisy qubits. Note that this definition can be generalized to $\nu$ level quantum bits, by changing the number above the sum to $\nu - 1$ \cite{berlin}. As an example, consider the five-qubit code, where the number of logical qubits is one and the number of noisy qubits is five. In this case, Equation (\ref{tensor}) simplifies to,
\begin{equation}
    T: \mathcal{H}_2 \to \mathcal{H}^{\otimes 5}_2, \quad \ket{q} \to \sum^{1}_{a,b,c,d,e = 0} T_{abcde}^{q} \ket{abcde},
\end{equation}
 where values $ T_{abcde}^{q}$ can be derived from the logical qubit states of the five-qubit code, noted in Section \ref{try}. This tensor will be the building block used for the holographic states in this thesis.
 \subsubsection*{Isometric Tensors}
 A linear mapping between any two Hilbert spaces is called an isometry when the inner product between all vectors is conserved. For a tensor acting on finite Hilbert spaces, this is equivalent to stating that $T^{\dag} T$ is the identity on the domain of $T$. Equivalently, it can also be stated as,
 \begin{equation}
     \sum_{a,b,c,d,e} ({T^{q'}_{abcde}})^{\dag} \cdot  T^{q}_{abcde}  = \delta_{qq'}.
 \end{equation} \\
 Isometric tensors have many interesting properties that will be exploited in this thesis. Firstly, in order for a tensor $T : \mathcal{H}^{A} \to \mathcal{H}^{B}$ to be possibly isometric, their dimensions must satisfy dim($\mathcal{H}^{A}$) $\leq$ dim($\mathcal{H}^{B}$). Secondly, isometric tensors allow for easy contractions between itself and its hermitian adjoined.  This is visible in Figure \ref{isometric}, where T and $T^{\dag}$ are executed consecutively. Due to the isometric nature of this tensor, this will be equivalent to the identity operator acting on $A$.
 \begin{figure}[h!]
     \centering
     \includegraphics[scale = 0.75]{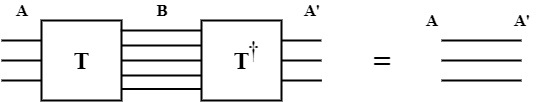}
     \caption{The Penrose graphical notation of an isometric tensor $T$ contracting the five outgoing legs with its hermitian $T^{\dag}$ to create the maximally entangled state between $A$ and $A'$.}
     \label{isometric}
 \end{figure}
While no actual state has been defined in the Hilbert spaces $A$ and $A'$, the holographic state associated to this tensor can be found by reshaping the contracted tensors to a density matrix. For isometric tensors this density matrix, and consequently the subspace A, will always be maximally entangled. However, when actually assigning a state to $A$ the entanglement entropy might change. As an example, projecting all three legs of $A$ to be the ground level $\ket{0}$, will create the state on the Hilbert spaces $A \otimes A'$ as $\ket{000000}$ for which the Von Neumann entropy $S_A$ is zero. 
%However, it is also possible to assign the superposition of states to $A$, such as $\frac{\ket{000} + \ket{111}}{\sqrt{2}}$, for which the entanglement entropy is always the maximal value $S_A = 3log(2)$.
\\ \\
 Lastly, isometric tensors have a property that is at the core of this thesis. It is the ability to \textit{push} operators through. Consider an operator $O$ acting on the domain of an isometric tensor $T$. This operator is able to be pushed through and transformed to an operator $O'$ acting on the image of the tensor by performing $OT = T T^{\dag} O T $ = $ T O'$, where $O' = T^{\dag} O T$. This is a feature used by quantum error correcting codes to push logical operators to operators acting on the noisy qubits. As an example the logical $\overline{X}$ operator of the five-qubit code can be pushed through the isometric five-qubit tensor and transformed into $X_1 X_2 X_3 X_4 X_5$ \cite{Got}.
\subsubsection*{Perfect Tensors}
Perfect tensors are a subset of isometric tensors for which an additional property holds. Any bipartition of the indices of a tensor $T_{a_1,a_2,..a_n}$, regardless of them being an upper or lower index, into $A$ and $A^c$ with dim($A$) $\leq$ dim($A^c$), will represent a tensor $T$: $A \to A^c$ that is proportional to an isometric tensor. Using the properties of isometric tensors, this allows to state that when a tensor is split up into $A$ smaller or equal than its complement $A^c$, that $TT^{\dag}$ will be proportional the identity.  For example, any bipartition of the indices of the perfect five-qubit tensor, regardless of them representing a logical state or not, into $A$ (incoming legs) and $A^c$ (outgoing legs) where dim($A$) is one, two or three, will have $T T^{\dag}$ be proportional to the identity. Using the fact that $T^{\dag}$ is just the transpose of $T$, this allows for the contraction of subsequent five-qubit code tensors, into a tensor proportional to the identity, as long as there are three or more connected legs. The holographic states that can be associated with these tensors are \textit{absolutely maximally entangled} (AME) states \cite{berlin}. This means that any bipartition of the state into $A$ and $A^c$ with dim($A$) $\leq$ dim($A^c$) will result in a maximally entangled state $A$. As an example, the addition of any leg of the five-qubit tensor to the region $A$ with dim(A) $\leq$ dim($A^c$) will result in the rise of the entanglement entropy $S_A$ with one. Furthermore, due to the property $S_A$ = $S_{A^c}$, this also means that the addition of any leg to region $A$ with dim(A) $\geq$ dim($A^c$) will decrease entanglement entropy $S_A$ by one. Yet again, actually assigning a quantum state to the Hilbert space $A$ might decrease the entanglement entropy. This will be further explored in Section \ref{3}.
\subsubsection*{Tensor Networks}
Using these perfect tensors, it is possible to create networks with certain desirable mathematical properties. As an example, in Figure \ref{pentagon} a network is created from five-qubit tensors represented by blue pentagons. In addition, red dots represent the outgoing logical/bulk qubits and white dots at the boundary the pure states of the noisy/boundary qubits. This is a so called HaPPY code, named after the initials of the authors \cite{Happy}.
\begin{figure}[h!]
    \centering
    \includegraphics[scale=0.75]{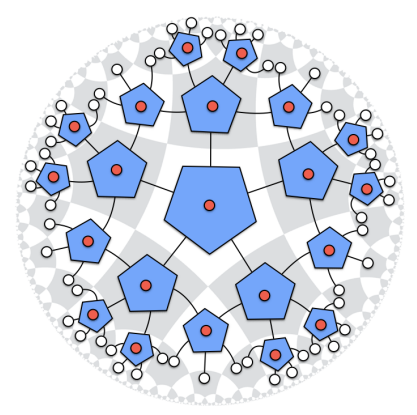}
    \caption{A two layer holographic pentagon code, where blue pentagons represent the five-qubit tensor, red dots the outgoing logical qubits and white dots at the boundary the pure states of the noisy qubits. Figure is taken from \cite{Happy}. }
    \label{pentagon}
\end{figure}
The tensors, more precisely their underlying tessellations, are arranged such that a hyperbolic space is created around a central qubit in the middle. Moreover, the negative curvature of the tessellations of the disk around a central tile together with its maximal symmetry constitutes it being an Anti-de Sitter space or AdS \cite{Happy}. Therefore can these tensor networks be considered as toy models of AdS. The maximal distance between the central tensor and boundary legs, defined as the amount of tensors one needs to pass through to get there, is called the size $n$ of the code. It is also said that the code has $n$ layers. Here the space is two dimensional. However, this formalism can be extended to any dimension \cite{Happy}. \newline \newline
What is often studied with these tensor networks is the ability to reconstruct operators acting on bulk qubits as operators acting  on certain subregion $A$ of boundary qubits. This has a lot of similarities with determining the equivalent noisy qubits operators of logical operators, as done in Section \ref{error cor}. For this purpose is the entanglement entropy, as defined in Formula (\ref{entr}) of the subregion $A$, relevant. For an arbitrary region $A$ a cut $c$ through the tensor network ending on the boundary $\partial A$ between $A$ and $A^c$ may be drawn. The length of this cut is defined as the amount of tensor legs it passes through. It has been found that the entanglement entropy of a boundary region $A$ will satisfy the bound \cite{berlin},
\begin{equation}
    S_A \leq |\gamma_A| \log(\nu),
    \label{entr 2}
\end{equation}
where $|\gamma_A|$ is the length of the smallest cut $\gamma_A$ on the boundary of $A$ and $\log(\nu)$ the natural log of the dimension of the boundary states, which for qubits is $\nu = 2$.
\subsection{AdS/CFT Correspondence}
An attentive reader might wonder why this whole set-up has any relevancy to modern physics. For this the most curious objects in the universe, black holes, are to be examined. In 1974 Stephen Hawking put forth the Bekenstein–Hawking formula for the entropy of a black hole \cite{microsoft},
\begin{equation}
    S_{BH} = \frac{c^3 A}{4 G \hbar},
    \label{bh}
\end{equation}
where $A$ is the surface area of the black hole and $c$, $G$ and $\hbar$ are the speed of light, gravitational constant and Planck's constant divided by $2\pi$, respectively. This essentially states that the entropy of a black hole does not increase with its volume but with its boundary size, the surface area $A$. The similarity between this observation and the bulk/boundary qubits of last chapter, seems to suggest that the quantum states of the black hole are encoded in its boundary states. It was indeed conjectured by Leonard Susskind and Gerardus ’t Hooft that a theory of quantum gravity, not only black holes, would have to be able to reduce a 3+1-dimensional space-time to a 2+1-dimensional description on its boundary \cite{sus, Thooft}. This is called the \textit{holographic principle}. \\ \\
The first concrete realization of the principle was proposed by J. Maldacena in 1997. He conjectured the Anti-de Sitter/Conformal field theory correspondence or AdS/CFT in short, which relates a $d+1$-dimensional gravity theory to a $d$-dimensional quantum field theory on its boundary \cite{maradona}. This theory has been widely discussed, reaching 23.000 citations in 2023. Toy models of this correspondence include the two layer pentagon code, shown in Figure \ref{pentagon}, and other HaPPY codes, such as the single qubit code and the pentagon/hexagon code shown in Figure \ref{temp}. In these models, the bulk/boundary degrees of freedom will correspond to physical/logical degrees of freedom \cite{Happy}. Within this correspondence a different entropy, the Ryu-Takayanagi or RT entropy, determines the entanglement entropy of a connected boundary subsystem $A$ as \cite{nani},
\begin{equation}
    S_{A} = \frac{|\gamma_A|}{4G},
    \label{entr3}
\end{equation}
where $|\gamma_A|$ is the minimal area in Planck units ($l_p$) of a $d-1$-dimensional surface $\gamma_A$ with its boundary on $A$ and $G$ the gravitational constant. Similarly, will the entropy of subregions $A$ of tensor networks in this AdS space be calculated analogously to this RT formula, where $|\gamma_A|$ becomes the minimal cut through this tensor network ending on $A$. The RT Formula is not only a generalization of the black hole entropy, Formula (\ref{bh}), but it is also in agreement with this bound in entropy of Formula (\ref{entr 2}). The AdS/CFT correspondence has already experienced success in modeling complex gravitation systems such as black holes and wormholes \cite{bbh, google}.
\begin{figure}[h!]
    \centering
    \includegraphics[scale=0.66]{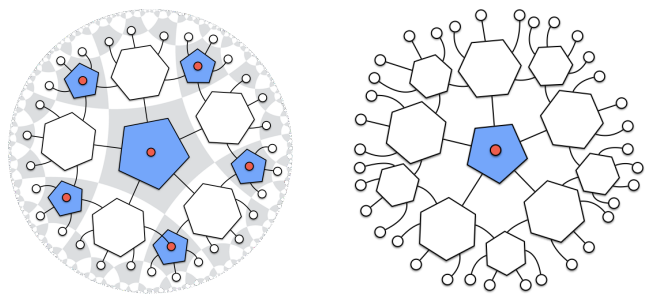}
    \caption{Pentagon/hexagon code (left) and single qubit code (right). Figure from \cite{Happy}.}
    \label{temp}
\end{figure}
\section{Wedges and Operator Reconstructability}
\label{sectionwed}
The purpose for which the introduced tools are utilized is operator reconstructability in AdS/CFT. In this theory a $d+1$-dimensional gravity bulk is modeled with $d$-dimensional quantum field theory on its boundary. Consequently, operations in this gravitational bulk correspond to operators on its boundary. This is usually done by pushing the operators outwards to the boundary, analogously to the method introduced in Section \ref{massage}. However, when working with a fixed subregion $A$ of the boundary, bulk operations might not allow to be modeled with only $A$. \\
\\ For this purpose, a number of wedges will be introduced which correspond to cuts within the bulk of tensor networks that encapsulates bulk qubits. These include the causal wedge, the greedy entanglement wedge, the minimal entanglement wedge and the geometric entanglement wedge. In the next section the differences between these wedges will be examined. Additionally, it is investigated how changes in the tensor network such as different types of networks, non uniform boundaries and stacking of the networks, impact the shape of the wedges.\\ \\
The framework to study these changes will be Monte-Carlo simulations of the boundary erasure and central qubit inclusion. Specifically, for every iteration of the simulation a new erasure probability $p$ will be uniformly chosen between zero and one. With this probability a boundary subregion $A$ will be established by performing a Bernoulli trial for every boundary leg with the probability $p$ where success corresponds to the erasure of the boundary qubit and failure to the inclusion of the boundary qubit into $A$. Thereafter, the respective wedge will be constructed on the boundary of $A$. If the wedge contains the central bulk qubit, defined as the bulk qubit of the central tensor in the tensor network, the simulation will be deemed a success. Finally, the success ratio of the simulation will be averaged over 30 even intervals of $p$, corresponding to bins, and plotted in function of the middle of those bins. This method was first introduced by \cite{Happy}.  The method allows for the detection of threshold probabilities $p_c$ defined as the maximum erasure probability where the chance of the central qubit not included in a defined wedge, becomes exponentially small with increasing $n$ for all $p < p_c$. The threshold probability will be the erasure probability for which the plot will converge to a step function for $n$ going to infinity. 
\begin{figure}[h!]
    \centering
    \includegraphics[scale = 0.35]{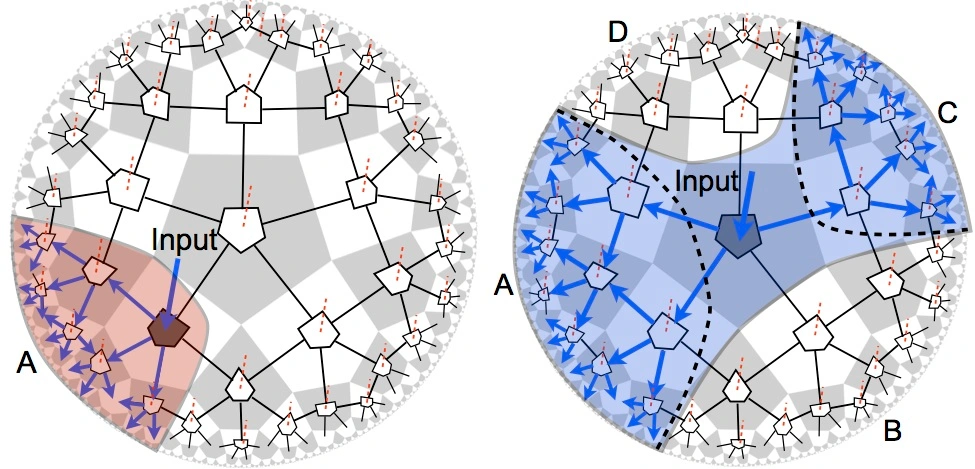}
    \caption{Example of the causal wedge (left) and the entanglement wedge (right) indicating the operator reconstructability in the holographic pentagon code of a bulk qubit, labeled input, and the boundary of respectively $A$ and $A \cup C$. Figure is taken from \cite{Yoshi}.}
    \label{wedges}
\end{figure}
\subsection{Causal Wedge}
The causal wedge received its name from its bulk gravitational interpretation. It is defined in \cite{caus} for a boundary subregion $A$ as the intersection of the bulk past and future of the domain of dependence of $A$. This domain of dependence is the region of bulk points where their physics is fully determined by the initial conditions of $A$. However, what is often investigated in literature, is the reduced causal wedge where the wedge is projected to the same time slice as $A$. Hence, the initial wedge shape in the space-time diagram is reduced to a space slice visible in Figure \ref{wedges} (left). Additionally, in the setting of tensor networks, the construction of this causal wedge will not rely on causal connections, but will rather rely on extremizing a $d-1$-dimensional surface $\gamma_A$ in accordance with the RT entropy depicted in Formula (\ref{entr3}). This construction was described by Pastawski et al. \cite{Happy} and relies on performing a so called greedy algorithm on the boundary. \\ \\ This algorithm consists of first performing a trivial cut $c$ along the boundary $A$ and then extending that cut each iteration. In an iteration all tensors outside the cut that have at least half of their legs connected to tensors within the cut will force the cut to move and include the tensor to the region. The algorithm will end in the resulting greedy wedge $\gamma^*_A$ when no new tensors are added in one iteration. Because the tensors added to cut $c$ had at least half their legs intersecting with $c$, the new cut will always have a smaller or equal size. Consequently, $|\gamma^*_A|$ will be a local minimum of the cut sizes. As stated in \cite{Happy}, that local minimum will be an absolute minimum, resulting in $\gamma^*_A = \gamma_A$, when $A$ is a connected region on the boundary. However, when the region is disconnected, the causal greedy algorithm may not succeed in finding the true minimal geodesic or geometric wedge $\gamma_A$. This is visible in Figure \ref{wedges} (right), where the greedy algorithm applied to the disconnected region $A$ and $C$ individually, highlighted by the black dotted line, does not find the geometric wedge $\gamma_A$ which encapsulates the blue region. The Monte-Carlo simulation for the holographic pentagon code of the central qubit inclusion when performing the greedy algorithm for each disconnected region resulting from the simulated erasure is visible in Figure \ref{sim-1}.
\begin{figure}[h!]
    \centering
    \includegraphics[scale = 0.69]{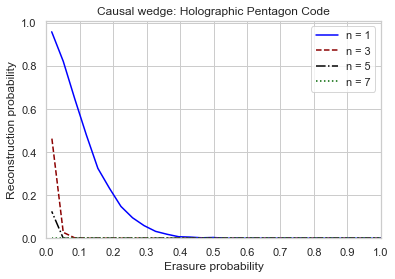}
    \caption{A Monte-Carlo simulation of the central tensor being incorporated by the greedy causal algorithm applied to the holographic pentagon code with $n$ layers. The size of the Monte-Carlo simulation is $10^5$ for the one and three layer code and $250$ for the five and seven layer code.}
    \label{sim-1}
\end{figure}
\\ \noindent
In Figure \ref{sim-1} it is visible that there is no convergence of the inclusion curve to a step function, but rather it converges to the zero function corresponding to an erasure probability $p_c$ of zero. In other words, the erasure of an infinitesimal small percentage of boundary qubits will, in the large $n$ limit, almost always result in the omission of the central qubit in the causal wedge. It has been found in \cite{Happy} that for the pentagon code the amount of boundary qubits rises as $\mathcal{O}\left(\left(\frac{3+\sqrt5}{2}\right)^n\right)$ with the amount of layers $n$. Consequently, Figure \ref{sim-1} indicates that the number of boundary qubits that are allowed to be erased while still maintaining central qubit inclusion rises significantly slower than that. \\ \\
This is in agreement with a clever example in \cite{Happy}, where it is stated that the erasure of just four boundary qubits at certain places in a large layer pentagon code will result in the omission of the central qubit. The behaviour of the inclusion graph is a result of the disconnected regions, created due to the erasure of just a couple of qubits, not being able to individually attain the true minimal bulk geodesic $\gamma_A$. Moreover, simulations with the single qubit code and the pentagon/hexagon code showed the same convergence behaviour.
\subsection{Entanglement Wedge}
To be compatible with disconnected regions of the boundary, the greedy entanglement wedge was introduced by Pastawski et al. \cite{Happy}. Similar to the causal wedge, is the greedy entanglement wedge constructed by applying the greedy algorithm to the boundary. However, rather than applying the algorithm to each disconnected boundary region individually, the algorithm will now be applied to all regions simultaneously. As a result, disconnected regions are now able to contribute together to advancing the cut. The greedy entanglement wedge is visible in Figure \ref{wedges} (right). Here, by applying the greedy entanglement algorithm to boundary regions $A$ and $C$, the greedy entanglement wedge $\gamma_A^*$ that encapsulates the blue region is created. For this example this is also the geometric wedge $\gamma_A$. On this figure, the distinction between the causal- and entanglement wedge is clearly visible. The entanglement wedge includes the central tensors and the individual causal wedges, noted by the black dotted line, fail to include it. \\ \\
This difference is also visible in the Monte-Carlo simulation for the boundary erasure/central qubit inclusion in the holographic pentagon network visible in Figure \ref{sim0}. In this figure, it is visible that the graphs are descending significantly later than the causal wedge. However, by looking at probabilities where the graphs start to drop, the inclusion graph seems to still be converging to the zero function. Therefore, the erasure of any fraction of the boundary qubits will, in the big $n$ limit, almost always result in the omission of the central qubit. In other words no threshold probability is detected. This conclusion and simulation is in agreement with the same Monte-Carlo simulation performed by Pastawski et al. \cite{Happy}. \\ \\The question may arise if the individual positions of the erased boundary qubits have any significance in the inclusion of the central qubit. Figure \ref{sim0} indicates this as well. For small layer networks the function gradually drops to zero, indicating that for the same erasure probability there are some configurations of $A$ which include the central qubit and some which do not. For the pentagon code this can be illustrated by the example in Figure \ref{exmp1}. For this one layer pentagon network, a different boundary $A$ of the same size has been created. On the left was for three tensors only one qubit erased. Due to three qubits still remaining in the boundary $A$, the greedy entanglement wedge will still advance past all these tensors, eventually including the central qubit. This is due to the fact that $A$ has a very top heavy shape. However, on the right an identical network with the same boundary size was created where the central qubit is not included in the entanglement wedge. Although these very asymmetric configurations of $A$ are still possible in networks with more layers, they become exponentially unlikely as $n$ increases. Therefore, in Figure \ref{sim0} the inclusion graphs drop more rapidly for higher layer networks.
\begin{figure}[h!]
    \centering
    \includegraphics[scale = 0.69]{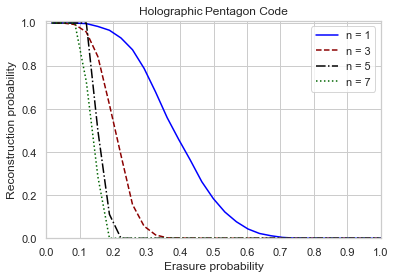}
    \caption{A Monte-Carlo simulation of the central tensor being incorporated by the greedy entanglement algorithm applied to the holographic pentagon code with $n$ layers. The size of the Monte-Carlo simulation is $10^5$ for the one and three layer code and $250$ for the five and seven layer code.}
    \label{sim0}
\end{figure}
\begin{figure}[h!]
\minipage{0.50\textwidth}
  \includegraphics[width=\linewidth]{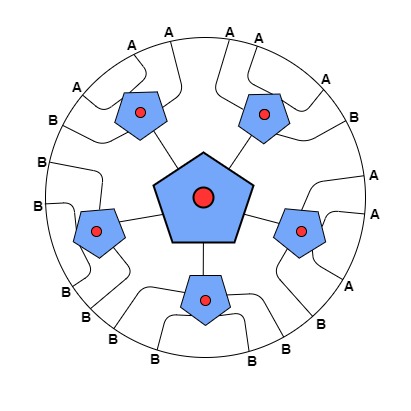}
\endminipage\hfill
\minipage{0.50\textwidth}%
  \includegraphics[width=\linewidth]{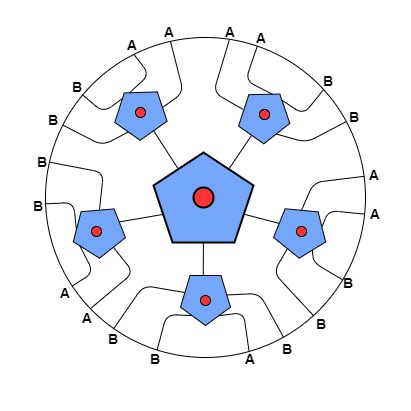}
\endminipage
\caption{An example of the shape of the boundary region $A$ having an impact on the inclusion of the central qubit in the greedy entanglement wedge. On the left the greedy entanglement wedge will advance to include the central qubit, where on the right it will not.}
\label{exmp1}
\end{figure}
\\ \\
\subsubsection*{Type of tensor networks}
It can be investigated how changes in the type of tensor network impact a possible threshold erasure probability $p_c$. For this purpose, the tensor networks visible in Figure \ref{temp} were used. While these tensor networks are still HaPPY codes, satisfying the requirements for the AdS/CFT correspondence, they differ from the holographic pentagon by the use of hexagons. These hexagons still are five-qubit tensors. However, the logical qubit leg, unlike pentagons, do not represent bulk outgoing qubit but rather legs that remain in the AdS-surface. As a result, the underlying structure of the tensor network changes. These changes can be represented by the fraction $f = \frac{N_{bulk}}{N_{boundary}}$, where $N_{bulk}$ and $N_{boundary}$ are the total amount of bulk and boundary qubits respectively. The value of this parameter was found by \cite{Happy} for the different tensor networks. While the fraction may differ based on which layer the boundary was cut off, in general it will hold $f_{pent} < f_{pent/hex} < f_{single qubit}$. In Figure \ref{sim1}, it can be found how this fraction impacts the Monte-Carlo simulations of central qubit inclusion and boundary erasure.
\begin{figure}[!htb]
\minipage{0.50\textwidth}
  \includegraphics[width=\linewidth, height=5cm]{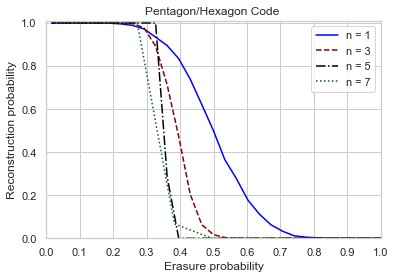}
\endminipage\hfill
\minipage{0.50\textwidth}%
  \includegraphics[width=\linewidth,height=5cm]{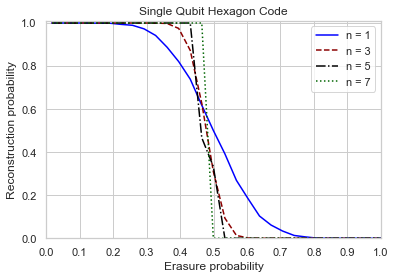}
\endminipage
\caption{A Monte-Carlo simulation of the central tensor being incorporated by the greedy entanglement algorithm applied to the pentagon/hexagon code (left) and the single qubit code (right) with $n$ layers. The size of the Monte-Carlo simulation is $10^5$ for the one and three layer code and $250$ for the five and seven layer code.}
\label{sim1}
\end{figure}
\\
The Monte-Carlo simulations indicate that, in contrast to the causal wedge, the inclusion graphs do change with type of tensor network. Recall that for the holographic pentagon code using the greedy entanglement there was no threshold probability detected by the Monte-Carlo simulation in Figure \ref{sim0}. In Figure \ref{sim1}, it is visible that the single qubit network does have an erasure thresholds around the value $p_c \approx 0.50$. This means that about half of boundary qubits may be erased before the central qubit will be excluded from the greedy entanglement wedge. While for the pentagon/hexagon network it is unsure if a threshold probability exists, it still shows better inclusion than the pentagon network. The results indicate that decreasing the fraction $\frac{N_{bulk}}{N_{boundary}}$ by changing tensor networks will improve the threshold probability $p_c$ until a maximum value of 0.5. \\ \\
One hypothesis to possibly justify this behaviour is that a lower fraction allows the mutual information between the central qubit and boundary region to be more spread out across the boundary. Therefore, some minimal value of mutual information could be reached more quickly for a lower fraction. Consequently, the operators acting on the central qubit could allow for a faster reconstructability. This idea will be further investigated in Section \ref{3}.
\subsubsection*{Shape of the boundary}
As suggested by \cite{Happy}, another way to decrease the fraction $\frac{N_{bulk}}{N_{boundary}}$ is by performing a non-uniform cut off on the boundary. This is executed by allowing boundary qubits to have different distances to the central tensor. In Figure \ref{temp}, such a non-uniform cut off is visible of the pentagon/hexagon- and single qubit code. For these networks, the cut off is executed by removing the tensors which are not connected to two or more tensors of the previous layer. The Monte-Carlo simulations for these networks are visible in Figure \ref{sim2}. These results are consistent with the simulations performed by \cite{Happy}. By comparing these inclusion graphs with the equivalent uniform boundary graphs depicted in Figure \ref{sim1}, it is visible that a change in boundary shape has a small effect on the inclusion graph. A significant change in shape occurs for the three layer tensor networks, where the graph seems to be more spread out with a non uniform boundary. Because the cut off for the one layer networks remains the same, the three layer networks are the smallest simulated tensor networks that are effected by the change in boundary. A three layer network with a non-uniform cut off will have less boundary legs and consequently will be more susceptible to asymmetric boundary division, resulting in the stretching of the inclusion graph.
\begin{figure}[h!]
\minipage{0.50\textwidth}
  \includegraphics[width=\linewidth,height=5cm]{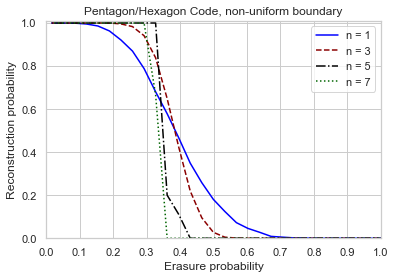}
\endminipage\hfill
\minipage{0.50\textwidth}
  \includegraphics[width=\linewidth,height=5cm]{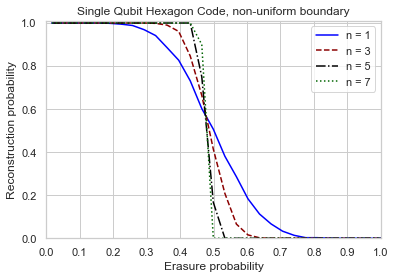}
\endminipage
\caption{A Monte-Carlo simulation of the central tensor being incorporated by the greedy entanglement algorithm applied to the pentagon/hexagon code (left) and the single qubit code (right) with n layers with a non-uniform boundary. The size of the Monte-Carlo simulation is $10^5$ for the one and three layer code and $250$ for the five and seven layer code.}
\label{sim2}
\end{figure}
The inclusion graph of a single qubit network has the same threshold probability, graphically determined as $p_c \approx 0.50$. For the pentagon/hexagon code the plausibility of a threshold probability seems to have increased with respect to the uniform boundary inclusion graph. This is in agreement with \cite{Happy}, where it was proposed that through a bounding procedure it was possible to find a lower bound on the non zero threshold probability of the non-uniform pentagon/hexagon code. For the non-uniform single qubit network, such a lower bound was indeed found by \cite{Happy}. They have proven analytically that for any erasure probability $p$ with $p < \frac{1}{12}$, the probability to have guaranteed operator reconstructability is higher than $1 - \frac{1}{12} \left( 12p \right)^{\lambda^n}$, with $\lambda = \frac{1 + \sqrt{5}}{2}$. This means that the chance of exclusion of the central tensor becomes exponentially small for increasing layer depth $n$ for any $p < \frac{1}{12}$. For this estimation a hierarchical recovery model was used based on the propagation models of erasures on the boundary. This model is weaker than the greedy entanglement wedge since it does not allow for lower layer tensors to contribute to higher layer tensor inclusions. This means that the threshold probability $p_c$ for inclusion in the greedy entanglement wedge is at least larger or equal than $\frac{1}{12}$. However, this model fails with the pentagon code since the recursion relationship between errors of different layers does not allow for a scaling dimension $\lambda$ different than one. This basically means that the error rate does not decrease or increase towards the central qubit for this hierarchical model \cite{Happy}. Therefore, no lower bound on the threshold probability was able to be established. An attempt was made to construct this same hierarchical model for the pentagon/hexagon code, but no threshold probability was found.
\subsubsection*{Stacking of tensor networks}
One last possible method to possibly increase the fraction $\frac{N_{bulk}}{N_{boundary}}$ is to stack multiple tensor networks. This method, first introduced by \cite{stack}, consists of stacking multiple tensor networks on top of each other and routing the corresponding bulk legs to a single bulk leg using random all-to-all (AA) scrambling circuits, illustrated in Figure \ref{stack}. These circuits will be simulated by constructing tensors where all the coefficients are randomly chosen from a standard normal distribution. Through this stacking method, the fraction can be made arbitrarily low by increasing stack size. There are two possible ways of defining a boundary region for stacked tensor networks. The first method is to create a boundary region $A$ for a single stack and copying that boundary region to all stacks. This is the method introduced by \cite{stack} and visible in Figure \ref{stack}. However, one can also define new regions $A$ with the same erasure probability $p$ on each stack of the network. For the first method the greedy entanglement wedge will be identical for all stacks and the resulting inclusion graph will remain the same, whereas with the second method the entanglement wedge might increase to include some of the random all-to-all tensor resulting in a net positive in central qubit inclusions. This is visible in the Monte-Carlo simulations of Figure \ref{sim3}.
\begin{figure}[h!]
    \centering
    \includegraphics[scale = 0.75]{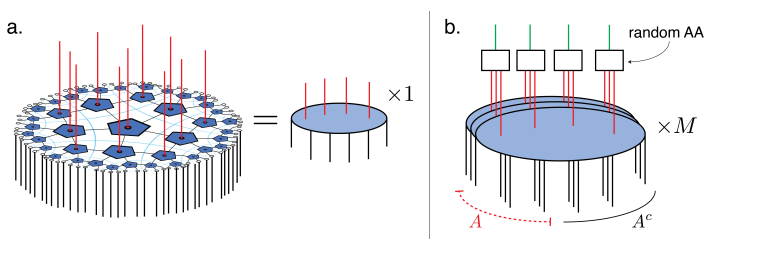}
    \caption{Stacking the pentagon code. On the left a single holographic pentagon code and on the right $M$ stacked pentagon codes connected to random all-to-all scrambling circuits. Bulk legs are highlighted in red, and the boundary is noted with $A$. Figure is taken from \cite{stack}.}
    \label{stack}
\end{figure}
\begin{figure}[h!]
    \centering
    \includegraphics[scale = 0.65]{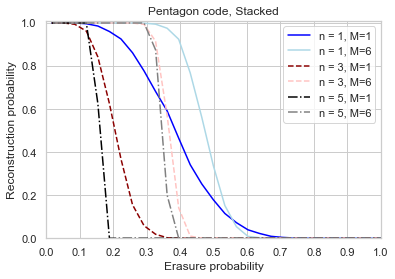}
    \caption{A Monte-Carlo simulation of the central tensor being incorporated by the greedy entanglement algorithm applied to $M$ stacked holographic pentagon codes with n layers, where the boundary subregion is redefined in each stack. The size of the Monte-Carlo simulation is $10^5$ for the one and three layer code and $250$ for the five layer code.}
    \label{sim3}
\end{figure}
 \\
In Figure \ref{sim3} Monte-Carlo simulations performed for boundary erasure/central qubit inclusion with the greedy entanglement wedge of six stacked holographic pentagon codes are visible as the light colors. The inclusion graph has moved from its one stack equivalent to the left by about 0.3 erasure probability. As mentioned this happens due to boundary region $A$ being different for all stacks. This results in some of the random all-to-all (AA) tensors being included in the entanglement wedge. This in turn can allow to entanglement wedge to be pushed further on other stacks. Note as well that only half the legs of the central AA tensor have to be included for the central qubit to be included and the simulation to be deemed successful. However, still no threshold probability is detected for this pentagon code. Lastly, the distribution of the AA tensors has no impact on the success rate. As an example, identical simulations with the coefficients uniformly drawn between zero and one, yielded the same inclusion graph.
\newpage
\noindent
While through this method of stacking, the fraction $\frac{N_{bulk}}{N_{boundary}}$ can be made arbitrarily low, it is visible in the shape of the inclusion graph that there still does not seem to be convergence to a threshold probability. This shows that central qubit recovery is not only affected by $\frac{N_{bulk}}{N_{boundary}}$ but also by the inherent structure of the tensor network. Changing the fraction may move the inclusion graphs of the same networks, but the inherent geometry of the tensor network still determines if there is convergence or not. This is also visible when attempting to do the same lower bound estimation for this stacked pentagon code. By stacking the tensor networks, one is essentially just adding a new ground layer to the tensor network. However, the scaling dimension of the propagation of errors of an hierarchical model of this stacked pentagon tensor network will still remain one, regardless of the fact that at the end a new layer has been added.
\subsection{Minimum Entanglement Wedge}
\label{sim}
The final wedge that is investigated in this thesis is the minimum entanglement wedge. This newly introduced notion of entanglement wedge loses its ensured operator reconstructability but will be a better approximate for the entanglement entropy. Since the greedy algorithm applied to $A$ finds a local minimum in wedge size $|\gamma^*_A|$, the greedy algorithm performed on the compliment $A^c$ will do the same for $\gamma^*_{A^c}$. However, this geodesic $\gamma^*_{A^c}$ will also have its boundary on A. Consequently, the geodesic $\gamma^*_{A^c}$ is an additional candidate for the true geometric entanglement wedge $\gamma_A$. From this idea the minimum entanglement wedge can be created as the minimum in geodesic size,
\begin{equation}
    \gamma_A^{min}  = \min\{{ \gamma_A^* , \gamma_{A^c}^* }\}
\end{equation}
where $\gamma_A^*$ and $\gamma_{A^c}^*$ are the greedy algorithm performed on $A$ and $A^c$ respectively. \\
For any boundary subregion $A$, the length of the cut of this minimum entanglement wedge $\gamma_A^{min}$ will be smaller or equal to the cut of the greedy entanglement wedge, and thus be a better candidate to the minimum cut of the true geometric wedge. This allows for a better approximation of the RT entropy in Formula (\ref{entr3}). However, as further explained in Section \ref{reconstr}, the wedge does not have the guaranteed reconstructability on $A$ for operators acting on bulk qubits included in the wedge. 
\subsubsection*{Classical RT Entropy} 
%As mentioned in the introduction of this section, the minimum wedge allows to investigate how the actual state of the bulk qubits effect the threshold probabilities.
Before performing this algorithm, a clear idea of wedge size has to be defined. In this thesis the wedge size will be defined as the number of tensor legs the geodesic representing that wedge crosses. This wedge size will be used to calculate the entropy of subregions analogously to the classic RT formula depicted in Formula (\ref{entr3}). However, for error correcting tensor networks the holographic Von Neumann entropy for a boundary $A$ can be approximated with \cite{corr1, corr2},
\begin{equation}
S(\rho_A) = \Tr(\rho \mathcal{L}_A) + S_{bulk}(\rho_{\epsilon_A}),
\label{error corected RT}
\end{equation}
where $\mathcal{L}_A$ is $\frac{Area(\gamma_A)}{4G}$ with the Newtonian constant $G$ and $\rho_{\epsilon_A}$ is the density matrix of all bulk qubits included in the entanglement wedge. \\ \\
As is visible, Formula (\ref{entr3}) and Formula (\ref{error corected RT}) will no longer agree. Consequently their tensor network equivalent will not either. However, numerically it has been found that, similar to the example of Section \ref{massage}, projecting the bulk qubit of the five qubit tensor to $\ket{0}$ or $\ket{1}$ changes the entropy. More precisely, when tracing out any three qubits of the tensor, the entanglement entropy will not increase to three anymore, but rather remain at two. Because all other interactions remain the same, this can be translated to removing all contributions of bulk qubits to the Von Neumann entropy. In other words the second term of Formula (\ref{error corected RT}) vanishes and all that remains is the wedge size, which corresponds to using the classic RT Formula (\ref{entr3}). The entropy of these systems will be further investigated in Section \ref{3}. Consequently, for the algorithm of the minimum wedge, the size of the wedge will be considered as the true size, the amount of tensor legs the geodesic crosses. This means that a geodesic which has its boundary on $A$ will have the same size when looking from the compliment boundary $A^c$.
\begin{figure}[h!]
    \centering
    \includegraphics[scale = 0.64]{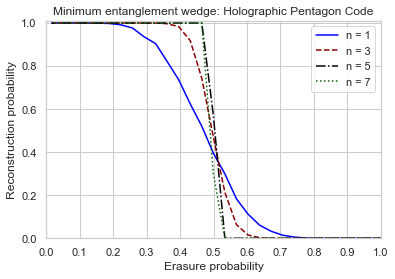}
    \caption{A Monte-Carlo simulation of the central tensor being incorporated by the minimum entanglement wedge with classical size applied to the holographic pentagon code with $n$ layers. The size of the Monte-Carlo simulation is $10^5$ for the one and three layer code and $250$ for the five and seven layer code.}
    \label{sim3.5}
\end{figure}
\noindent
The Monte-Carlo simulation of central qubit inclusion/boundary erasure using the minimum entanglement wedge without bulk leg contributions is visible in Figure \ref{sim3.5} for the holographic pentagon code. Here, the inclusion graph approaches a true step function around the value $p = 0.50$. Consequently, the threshold probability $p_c$ is 0.5. This was defined as the maximum erasure probability where the chance of the central qubit not included in a defined wedge becomes exponentially small with increasing $n$ for all $p < p_c$. This is a property of not only the minimum entanglement wedge acting on the pentagon code, but for the pentagon/hexagon- and single qubit code as well, visible in Figure \ref{sim4}. In an effort to explain this behaviour Theorem (\ref{bigbrain}) was introduced. 
\begin{theorem}
\label{bigbrain}
The threshold probability $p_c$ for any wedge $ \gamma_A $ is 0.50 regardless of the structure of a tensor network as long as the probability of $\gamma_A \neq \gamma_{A^c}$ becomes exponentially small for an increasing amount of layers.
\end{theorem}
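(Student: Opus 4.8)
The plan is to exploit the symmetry between a retained boundary region and its complement under the substitution $p \mapsto 1-p$. First I would introduce an auxiliary function $\phi(q)$, the probability that the central qubit $c$ is enclosed by the wedge grown from a boundary region in which every leg is kept independently with probability $q$. Because the wedge construction is a deterministic function of the region it is grown from and treats $A$ and $A^c$ identically, $\phi$ is a single well-defined function shared by both. Under boundary erasure with probability $p$, the kept region $A$ contains each leg with probability $1-p$ while the erased region $A^c$ contains each leg with probability $p$, so that
\begin{equation}
P(c \in \gamma_A) = \phi(1-p), \qquad P(c \in \gamma_{A^c}) = \phi(p).
\end{equation}

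The heart of the argument is a complementarity observation: whenever the two cuts coincide, $\gamma_A = \gamma_{A^c}$, the common cut tiles the bulk into the $A$-wedge and the $A^c$-wedge, so the central qubit lies in exactly one of them, giving $\mathbf{1}[c \in \gamma_A] + \mathbf{1}[c \in \gamma_{A^c}] = 1$ on this event. Taking expectations and bounding the contribution of the event $\gamma_A \neq \gamma_{A^c}$ by its probability, the hypothesis that $P(\gamma_A \neq \gamma_{A^c})$ is exponentially small in $n$ yields
\begin{equation}
\phi(1-p) + \phi(p) \xrightarrow{\ n \to \infty\ } 1
\end{equation}
with an exponentially small error, uniformly in $p$. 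In particular the limiting inclusion curve is antisymmetric about $(\tfrac12, \tfrac12)$ and passes through it.

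To close, I would feed this functional relation into the threshold structure. Since enlarging the kept region can only grow the wedge, $\phi$ is monotone increasing and its limit is a step $\mathbf{1}[q > q_c]$; substituting into $\phi(q) + \phi(1-q) \to 1$ forces $q_c = 1 - q_c$, hence $q_c = \tfrac12$. Equivalently, $P(c \notin \gamma_A) = 1 - \phi(1-p) \to \phi(p)$ tends to $0$ exponentially precisely when $p < \tfrac12$, so the erasure threshold is $p_c = \tfrac12$ irrespective of the network, as claimed.

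I expect the principal obstacle to be the complementarity step: one must verify that equality of the two cuts really produces a clean tiling of the bulk, so that $c$ can be claimed neither by both wedges nor by neither and never sits on the cut itself, and that the wedge rule is genuinely invariant under exchanging kept and erased legs. A secondary point is justifying the monotonicity and step-function limit of $\phi$ for the minimum entanglement wedge, which is what upgrades "the limit curve passes through $(\tfrac12,\tfrac12)$" to "the jump is located exactly at $\tfrac12$."
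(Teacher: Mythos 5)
Your proposal is correct and follows essentially the same route as the paper's proof: the decisive step in both is that when $\gamma_A = \gamma_{A^c}$ the two wedges partition the bulk, so the central-qubit inclusion probabilities at erasure rates $p$ and $1-p$ sum to one up to the exponentially small failure probability, which pins any threshold to $\tfrac12$. The paper packages this as a two-sided contradiction (assuming $p_c > 0.5$ and then $p_c < 0.5$) rather than your explicit functional equation $\phi(p)+\phi(1-p)\to 1$, and it silently relies on the same residual assumption you flag at the end, namely that the limiting inclusion curve really is a step function so that the crossing at $(\tfrac12,\tfrac12)$ locates the jump.
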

\begin{proof}
To proof by contradiction, assume $p_c > 0.5 $. Consequently, there exists an erasure probability $p_c > p > 0.5$, where the probability of exclusion of the central qubit included in the wedge, becomes exponentially small with increasing number of layers $n$. For all regions $A$ originating from this probability $p$ will, due to the assumption, $ \gamma_A $ almost always be equal to $\gamma_{A^c}$. As a result a bulk region enclosed by $\gamma_A$ and $A^c$, originating from an erasure probability $1-p$ has been found for which the probability of exclusion of the central qubit between $A^c$ and $\gamma_{A^c}$ not become exponentially small with $n$. Consequently, because $A$ was chosen arbitrarily, there exists an erasure probability $1-p < 0.5 < p_c$ for which the definition of $p_c$ does not hold.  \\ \\
Conversely, assume that $p_c < 0.5$. For almost every region $A$ created by a erasure probability $p_c < p < 0.5$ that includes the central tensor, there will exist a region enclosed by $\gamma_A$ and $A^c$ which will not include the central qubit. Consequently, an erasure probability $1-p < 0.5 < p_c$ was found where the chance of exclusion of the central qubit does not become exponentially small with increasing $n$, in contradiction with the definition of the threshold probability.
\end{proof}
\begin{figure}[h!]
\minipage{0.50\textwidth}
  \includegraphics[width=\linewidth,height=5cm]{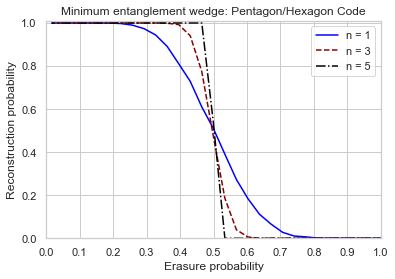}
\endminipage\hfill
\minipage{0.50\textwidth}%
  \includegraphics[width=\linewidth,height=5cm]{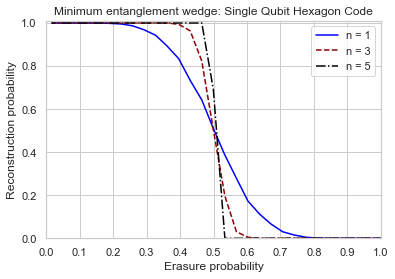}
\endminipage
\caption{A Monte-Carlo simulation of the central tensor being incorporated by the minimum entanglement wedge with classical size of the pentagon/hexagon code (left) and the single qubit code (right) with n layers. The size of the Monte-Carlo simulation is $10^5$ for the one and three layer code and $250$ for the five layer code.}
\label{sim4}
\end{figure}
\noindent This theorem applied to the minimum entanglement wedge states that if the probability $\gamma_A^{min}\neq \gamma_{A^c}^{min}$  becomes exponentially small for an increasing amount of layers, the threshold probability $p_c$ will be 0.50. The minimum entanglement wedge is not the same geodesic for a region and its compliment when the sizes of the greedy entanglement wedges are equal $|\gamma_A^*| = |\gamma_{A^c}^*|$, but the wedges itself are not $\gamma_A^* \neq \gamma_{A^c}^*$. In this case, in accordance to the geometric wedge, the minimum entanglement wedge of $A$ is $\gamma_{A^c}^*$ and of $A^c$ is $\gamma_A^*$ . While a definite proof of $\gamma_A^{min}\neq \gamma_{A^c}^{min}$ becoming exponentially small for increasing $n$ has not been found, an experimental case study, visible in Figure \ref{case}, suggests that it might be true. In this study, similar to the Monte-Carlo simulations, boundary regions $A$ were created corresponding to different erasure probabilities. For these regions the greedy entanglement algorithm was applied to both $A$ and $A^c$. Subsequently different greedy entanglement wedges with the same size, corresponding to different minimal entanglement wedges, were detected. In Figure \ref{case}, this estimated case probability was plotted in 30 bins of erasure probabilities. It is visible that for an arbitrary erasure probability $p$, the probability that $\gamma_A^{min}$ is not equal to $\gamma_{A^c}^{min}$ becomes really small for increasing layer depth $n$.
\begin{figure}[h!]
    \centering
    \includegraphics[scale=0.58]{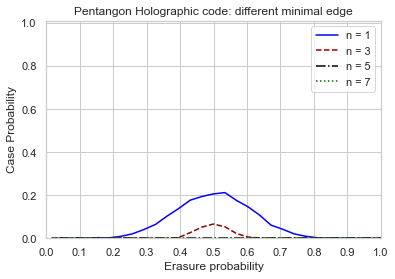}
    \caption{The estimated probabilities of the minimum entanglement wedge applied to the holographic pentagon code being different for a Monte-Carlo simulated boundary subregion $A$ and its complement $A^c$. The size of the Monte-Carlo simulation is $10^5$ for the one and three layer code and $250$ for the five and seven layer code.}
    \label{case}
\end{figure}
\subsubsection*{Error Corrected Entropy}
It is also possible to investigate the minimum entanglement wedge for the holographic state. By definition and numerical study of the five-qubit tensor, this is equivalent to utilizing the holographic state $\frac{\ket{0 \overline{0}} + \ket{1 \overline{1}}}{\sqrt{2}}$ in all tensors. With these states the entanglement entropy will rise with one whenever a bulk qubit is included in the wedge. Consequently, the entropy of the boundary subregion will also rise according to Formula (\ref{error corected RT}). For the algorithm acting on the subregion $A$ it means that for $A$, one will be added to the size of entanglement wedge $|\gamma^{*}_{A}|$ whenever a bulk qubit is included and for $A^c$ one will be added to the size of $|\gamma^{*}_{A^c}|$ for all bulk qubits encapsulated between $A$ and $\gamma^{*}_{A^c}$. As a result, $|\gamma^{min}_{A}|$ and $|\gamma^{min}_{A^c}|$ is no longer guaranteed to be equal.
\begin{figure}[h!]
    \centering
    \includegraphics[scale = 0.64]{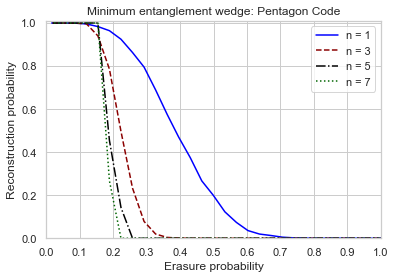}
    \caption{A Monte-Carlo simulation of the central tensor being incorporated by the minimum entanglement wedge with corrected size applied to the holographic pentagon code with $n$ layers. The size of the Monte-Carlo simulation is $10^5$ for the one and three layer code and $250$ for the five and seven layer code.}
    \label{min1}
\end{figure}
\\
In Figure \ref{min1} the central qubit inclusion for simulated boundary erasure with different erasure probabilities is visible for the minimum entanglement wedge that adds one to the wedge size for every bulk qubit included. Comparing this graph with the Monte-Carlo simulation of the entanglement wedge, visible in Figure \ref{sim0}, shows that the inclusion graph in the minimum entanglement wedge has shifted a bit to the left of that of the greedy entanglement wedge. This was expected since the minimum entanglement wedge will either encapsulate more or an equal amount of tensors than the greedy entanglement wedge. However, this shift in probability is only about 0.05. This means that the minimum entanglement wedge will include the central bulk qubit for a fraction of the boundary states that is only slightly higher than that the greedy entanglement wedge. The inclusion graph still does not seem to converge to a step function. In Figure \ref{min2}, the Monte-Carlo simulations for the single qubit- and pentagon/hexagon code are also visible. In this figure it is still visible that the pentagon/hexagon network does seems to converge to a step function with an erasure probability of about $p_c \approx 0.35$. The threshold probability of the single qubit code seems to advance from its entanglement wedge equivalent $p_c \approx 0.50$ to about $p_c \approx 0.60$. This indicates that the minimum entanglement wedge is moving past the greedy entanglement wedge. Lastly, there seem to be significant differences when comparing the inclusion graphs of the classical RT formula and those of the error corrected entropy. This happens since a geodesic which has its boundary on $A$ will no longer have the same size when looking from the compliment boundary $A^c$. Consequently, the proof of Theorem (\ref{bigbrain}) is no longer applicable.
\begin{figure}[h!]
\minipage{0.50\textwidth}
  \includegraphics[width=\linewidth,height=5cm]{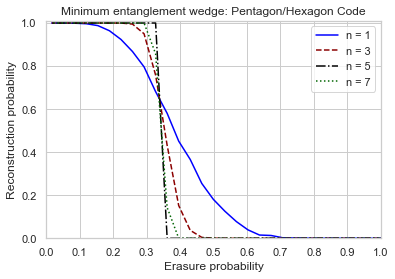}
\endminipage\hfill
\minipage{0.50\textwidth}
  \includegraphics[width=\linewidth,height=5cm]{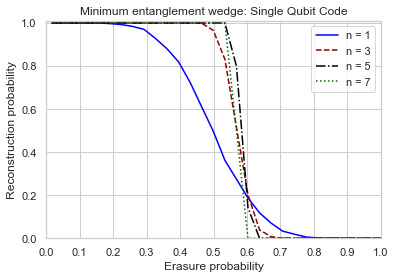}
\endminipage
\caption{A Monte-Carlo simulation of the central tensor being incorporated by the minimum entanglement algorithm with corrected size applied to the pentagon/hexagon code (left) and the single qubit code (right) with n layers. The size of the Monte-Carlo simulation is $10^5$ for the one and three layer code and $250$ for the five and seven layer code.}
\label{min2}
\end{figure}
\subsubsection*{Geometric wedge}
An example of the minimal entanglement wedge reaching the true geometric wedge is visible in Figure \ref{exmp2}. For this single qubit network with a non-uniform boundary the minimum entanglement wedge $\gamma_A^{min}$ reaches the true geometric entanglement wedge $\gamma_A$ as the blue geodesic. This blue geodesic is the greedy algorithm applied to the complement boundary $B$ and has the size $|\gamma_B^*| = 9$. The greedy entanglement wedge of $A$ highlighted in red has a size $|\gamma_A^*| = 12$. Consequently, the greedy entanglement algorithm falls short in finding the true geometric wedge. These wedge sizes have been calculated according to the classic RT formula, as the amount of legs the geodesic crosses. However, the example still holds for any state of the bulk qubit. For this example, an operator is reconstructed beyond the greedy entanglement algorithm in Section \ref{reconstr}.
\begin{figure}[h!]
    \centering
    \includegraphics[scale = 0.52]{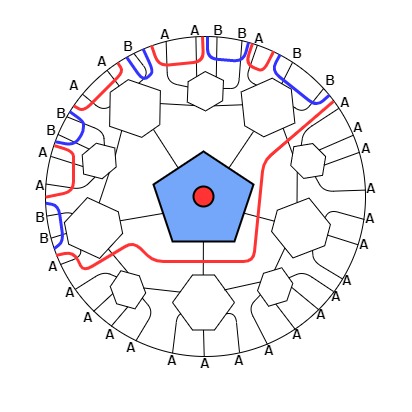}
    \caption{An example of a non-uniform boundary two layer single qubit network where the minimum entanglement wedge $\gamma_A^{min}$ reaches the true geometric entanglement wedge $\gamma_A$ as the blue geodesic and the greedy entanglement wedge of $A$ highlighted in red falls short. }
    \label{exmp2}
\end{figure}
\\
\\
The minimum entanglement wedge reaches the true geometric wedge for the one layer pentagon-, single qubit- and pentagon/hexagon code for all boundaries. However, in general, may the true geometric geodesic exist between the tensors enclosed by $\gamma_{A}^{*}$ and those enclosed by $\gamma_{A^c}^{*}$. In that case $\gamma_A^{min}$ will not find the true geodesic entanglement wedge.
\subsection{Operator Reconstructability}
\label{reconstr}
In this section, the explicit reconstructability of operators acting on bulk qubits is investigated. Firstly, recall that the construction of the causal- and entanglement wedge was to ensure reconstructability of all operators. Since the tensor networks are made up of perfect tensors and the greedy algorithm will only advance past a perfect tensor when half or more tensor legs are included in the wedge, any operator may be reconstructed using the framework of perfect tensors introduced in Section \ref{massage}. As an example, in Figure \ref{case0} an arbitrary operator $O$ acting on the central bulk qubit of the one layer pentagon code with the boundary region visible in Figure \ref{exmp1} (left), is pushed through to the boundary subregion $A$. For this, all other boundary qubits and bulk qubits are considered as additional input legs to the five-qubit perfect tensor. Nonetheless, the entanglement algorithm ensures that at least half of the tensor legs are output. This ensures that the tensor will remain proportional to an isometric tensor. Consequently, the operator $\mathcal{O}$ = $O \otimes I \otimes I$ can be pushed through the tensor as $\mathcal{O} T = T T^{\dag} \mathcal{O} T $ = $ T O'$, where $O' = T^{\dag} \mathcal{O} T$. For the subsequent layer, the same framework can be applied to $O'$. Yet again identity operators are appended $O'$ until it is an operator working on all non-region legs. Only now the operator must be pushed through  $T \otimes T \otimes T$. This poses no problem since the greedy algorithm ensured that for each of the individual tensors T at least half of the legs are output. Therefore, T will still be proportional to an isometric tensor and the operator can be pushed through using the fact that ($T \otimes T \otimes T$)($T^{\dag} \otimes T^{\dag} \otimes T^{\dag}$) is proportional to the identity. In the end this procedure will result in an operator acting on solely the boundary subregion. In this example the resulting operator will be $O''$ = ($T^{\dag} \otimes T^{\dag} \otimes T^{\dag}$)(($T^{\dag}(O \otimes I \otimes I)T) \otimes I_3 \otimes I_3$) ($T \otimes T \otimes T$). \\
This procedure is not limited to the central bulk qubit and will work on all bulk qubits included in the greedy entanglement wedge at the same time. Since $I$, $X$, $Z$ and $Y$ from a basis for operators, these individual operators are able to be written as the tensor product of individual operators acting on all bulk qubits. As a result can these operators be pushed through by simply applying the operators on the bulk qubits and pushing them through at the correct moment. For example, in Figure \ref{case0}, operators acting on bulk qubits that are not the central qubit may be pushed through together with $O'$.
\begin{figure}[h!]
    \centering
    \includegraphics[scale = 0.70]{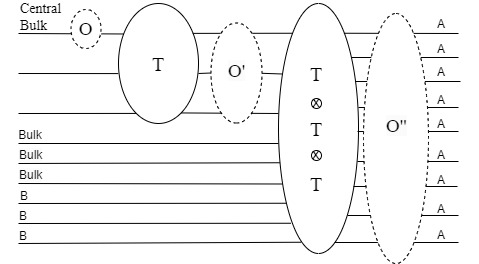}
    \caption{An example of an arbitrary operator $O$ acting on the central bulk qubit of the one layer pentagon code with the boundary visible in Figure \ref{exmp1} (left), being pushed through as the operator $O''$ acting on that boundary. }
    \label{case0}
\end{figure}
\subsubsection*{Minimum entanglement wedge}
As previously mentioned, the minimum entanglement wedge loses this assured operator reconstructability. If the minimum entanglement wedge of $A$ is equal the greedy entanglement wedge of $A$, there is no problem and all operators of included bulk qubits are reconstructable on $A$. The troubles comes when the minimum entanglement wedge is equal to $\gamma^*_{A^c}$. Because this geodesic generally does not come from $\gamma^*_A$, it is not assured that the formalism similar to Figure \ref{case0} is able to be applied. As an example, one is unable to use this same greedy formalism to send operators acting on the central qubit of Figure \ref{exmp2} to the boundary $A$. More precisely, this algorithm will fail by the third time an operator is send through as the amount of input legs becomes bigger then the output legs.
\subsubsection*{Geometric wedge}
While bulk qubits that are outside the greedy entanglement wedge are not guaranteed reconstructability on the boundary subregion, by exploiting symmetrical elements of the tensor network, some operators acting on bulk qubits beyond the greedy entanglement wedge can be reconstructed. While this has also been shown by \cite{Happy}, a more explicit example can be found by examining the network in Figure \ref{exmp2}. In this two layer single qubit code with a non-uniform boundary the greedy algorithm does not advance past the central qubit, but the area enclosed by true geometric geodesic, found by applying the minimum entanglement wedge, and the boundary subregion $A$ does enclose this central qubit. Conveniently, operators acting on this central qubit exists that are able to be reconstructed on $A$. To show this claim, inherent properties of the five-qubit tensor as a quantum error correcting code are exploited. Recall that this five-qubit code is a stabilizer code with its generators given in Table \ref{1}. These generators bring forth the stabilizer group $S$ of the five-qubit code under which the code subspace $P$ of logical qubits remains invariant. In other words, applying any element of the stabilizer group to the states which represent the logical qubits will simply rearrange the order of elements, but keep the states invariant. \\ \\
Consequently, as the code subspace remains invariant, the five-qubit tensor that encodes this code subspace to the correct logical qubit will also remain invariant under applying elements of the stabilizer. This is visible in Figure \ref{help}, where the five-qubit tensor remains invariant under the generator $M_1$ applied to its input legs. However, the legs of this tensors may be switched from input leg to output leg without a problem. This has been verified by explicitly determining the contracted tensor. Now, for the tensor network in Figure \ref{exmp2} there has been no explicit assigning of the tensor legs to corresponding qubits, apart from the bulk qubit in the central tensor. By cleverly choosing the boundary leg-qubit correspondence, replacing the tensors by tensors with operators according to Figure \ref{help} and changing the tensor legs as input and output accordingly, the outer layer of Figure \ref{case} can be constructed. As is visible in the figure, this arrangement will have operators acting only on boundary qubits of the subregion $A$. These operators get pushed trough to be Z operators that cancel with each other respectively and X operators acting on the central tensor legs. These X operators will then construct the logical $\overline{X}$ operator of the five-qubit code, which is pushed through to the central bulk qubit. This argument can be repeated in the opposite direction which allows the bulk X operator to be recreated on A.
\begin{figure}[h!]
    \centering
    \includegraphics[scale=0.50]{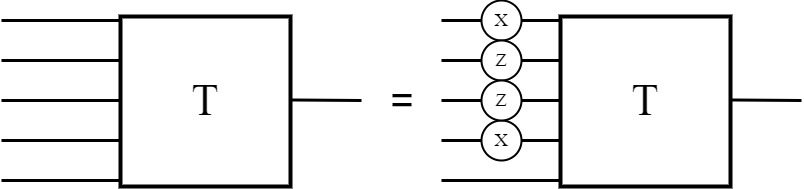}
    \caption{An example of how the five-qubit tensor remains invariant under the operator $M_{1}$ that is part of the stabilizer group of the five-qubit, visible in Table \ref{1}. }
    \label{help}
\end{figure}
\begin{figure}[h!]
    \centering
    \includegraphics[scale=0.58]{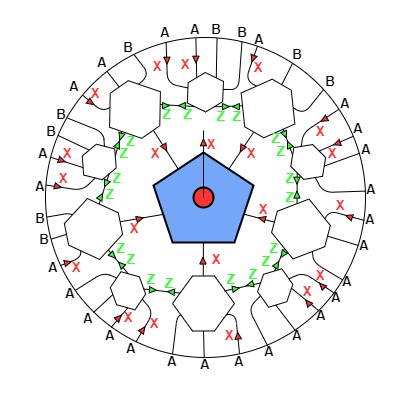}
    \caption{An example of how the X operator acting on the central qubit is recreated on a boundary $A$ for a tensor network where the central qubit is included in the geometric entanglement wedge but not the greedy entanglement wedge.}
    \label{case}
\end{figure}
\\ \\
The Z operator acting on the central bulk qubit can be reconstruced in much the same way. This can be done replacing all the X operators by Z operators in Figure \ref{case} and vice versa. While the X operators will cancel each other, the bulk Z operator will be reconstructed by Z operators on the boundary subregion $A$. However, it must be pointed out 
that for this to work with the generator $M_{1}$ the tensor legs/code qubit correspondence must be changed. \\ \\
While the tensor legs/code qubit correspondence was chosen according to our wishes, it has been done in such a way that the symmetries of the tensor network remain assuredly. It is unclear if allowing rationally invariant tensors to have the same legs corresponding to different code qubits changes something fundamentally of the tensor networks. As an example, it can be argued that this stabilizer method is limited if one chooses the legs that are connected to the central tensor to correspond the logical output of the connected hexagons. Because all stabilizer elements applied to this hexagon will always leave that leg unchanged, no non-trivial operator could be pushed through by this method. Consequently, while one might push operators to other legs, all legs connected the central tensor will remain the same. Thus, the central bulk qubit would remain unchanged as well. However, this limitation can be lifted when working with the six-qubit perfect state. For this state, stabilizers that have non trivial operators on the logical leg  exists. These are written down by Pastawski et al. in appendix A \cite{Happy}. By using these stabilizer elements, one is able to push logical operators through by the same method, regardless if the legs that are connected to the central tensor correspond to the logical output of the connected hexagons. This is an example that seems to indicate that the tensor legs/code qubit correspondence has no immediate influence on the global properties of the tensor network.
\newpage
\section{Mutual Information}
\label{3}
In this section the mutual information between a boundary subregion $A$ and the central qubit will be investigated. It will be examined if the inclusion of the central bulk qubit into the geometric wedge has any correlation with a rise in mutual information between the bulk qubit and boundary A. For this, the mutual information is given by, \begin{equation}
    I_{A,bulk} = S_{A} + S_{bulk} - S_{A,bulk}. 
\end{equation}
Because the sum of entropy of the individual systems will always be larger or equal to the entropy of the combination, the mutual information will be larger or equal to zero. The entropy for the toy tensor network models will be calculated using cut sizes, analogously to the corresponding holographic Formulas (\ref{entr3}, \ref{error corected RT}). However, one can also determine the entropy by calculating the Von Neumann entropy explicitly by tracing out the regions. A more in depth examination between these two different calculations has been performed by M. Grandjean \cite{max}. 
\subsubsection*{Classical RT Entropy}
When explicitly working with wedge size, one must define which algorithm was utilized. In this first section, no bulk contributions to the wedge size was allowed. As proven in \cite{Happy}, this wedge size will correspond to its holographic equivalent, the RT Formula (\ref{entr3}), for connected boundary subregions. Because of this, the mutual information will be investigated for a connected boundary subregion that is enlarged by one qubit at a time until the whole boundary region is reached. For one layer tensor networks, the minimum entanglement algorithm will be used to find the true geometric wedge. This allows in turn to determine the classical RT entropy. Since the bulk qubits are not chosen to contribute to wedge sizes, $S_{bulk}$ is always zero. The entropy of the boundary subregion together with the mutual information between the central bulk qubit and the moment of inclusion of that bulk qubit in geometric wedge, is plotted in Figure \ref{mut1} for the one layer pentagon- and single qubit code.
%% symmerty when boundary is symmetric => bulk qubits are just not entangled with the other ones
\begin{figure}[h!]
\minipage{0.50\textwidth}
  \includegraphics[width=\linewidth]{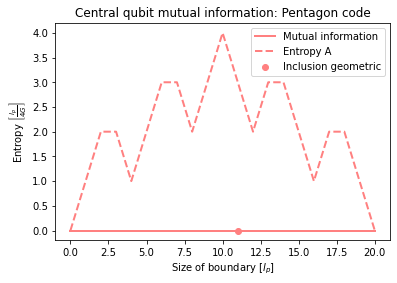}
\endminipage\hfill
\minipage{0.50\textwidth}%
  \includegraphics[width=\linewidth]{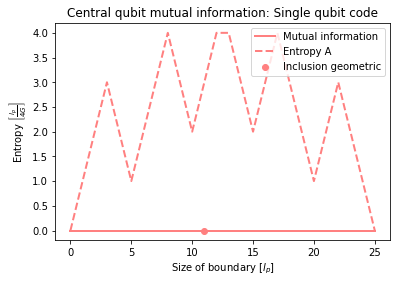}
\endminipage
\caption{The classic RT entropy of an increasing connected boundary region $A$ in function of the size of that boundary. In addition, the mutual information between central qubit and $A$  and the moment of inclusion in geometric wedge of the central bulk qubit. On the left the one layer holographic pentagon code and on the right the one layer single qubit code.}
\label{mut1}
\end{figure}
\\ 
In Figure \ref{mut1} it is visible that the mutual information remains zero for all boundary sizes. While this seems troublesome, it can be explained through the fact that the bulk qubits are not accounted for in the wedge size. In calculating the entropy $S_{A,bulk}$ the bulk central qubit was allowed to contribute to the greedy entanglement of $A$. However, since the minimum entanglement wedge was taken, the moment that the central bulk pushed the greedy geodesic $|\gamma_A^*|$ to include the central tensor, the compliment geodesic  $|\gamma_{A^c}^*|$ had the smaller size and kept the entropy equal to $S_A$. The entropy of both systems are also symmetric and has the property that a full boundary has an entropy of zero. This first property is an artifact of the starting point of the boundary and the inherent symmetry of the tensor network. The connected boundary region was started at an outer leg of a tensor and first moved to include that whole tensor. However, when starting from a middle tensor leg this symmetry fades. \\ \\ The behaviour of these entropy graphs can also be explained using the holographic state of the five-qubit code. The holographic state of the five-qubit tensors was given by $\frac{\ket{0 \overline{0}} + \ket{1 \overline{1}}}{\sqrt{2}}$. where the logical states where defined in Section \ref{try}. Not increasing the size of the wedge for the bulk qubits encapsulated, is proven to be numerically equivalent to assigning the collapsed states $\ket{0}$ or $\ket{1}$ to the bulk qubits. Consequently, the state of the entire system will change from the holographic state to be either $\ket{0 \overline{0}}$ or $\ket{1 \overline{1}}$. However, this projection forces the bulk qubit to lose the entanglement entropy as tracing out the noisy qubits will just yield $\ket{0}$ or $\ket{1}$ respectively. These have no entanglement entropy as they are pure states. Consequently, $I_{A,bulk}$ remains zero and the entropy graph ends on zero as well. \\ \\
In Figure \ref{mut2}, approximations of the RT formula are visible. In the left most figure random boundary qubits are gradually appended to the boundary. This is in contrast with the connected region of Figure \ref{mut1}. As a result the wedge size will only give an upper bound to the entropy according to Formula (\ref{entr 2}). It is visible on this graph that the upper bound will rise to 8, significantly higher than the connected region's peak, until about half the qubits are included in the boundary after which the upper bound on the entropy will decline until zero. \\ 
On the right graph of Figure \ref{mut2}, the RT entropy of an increasing boundary region for a two layer pentagon code is shown. However, since the pentagon code has two layers, it is no longer guaranteed that the minimal entanglement wedge will find the true geometric wedge. As a result the entropy will be an approximation of the true RT entropy. This is visible by the mutual information no longer remaining zero.
\begin{figure}[h!]
\minipage{0.50\textwidth}
  \includegraphics[width=\linewidth]{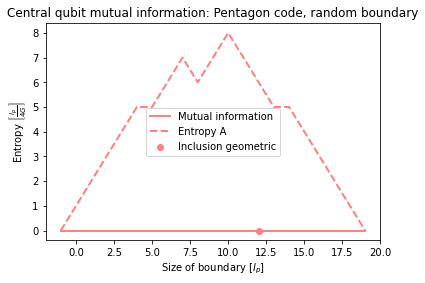}
\endminipage\hfill
\minipage{0.50\textwidth}%
  \includegraphics[width=\linewidth]{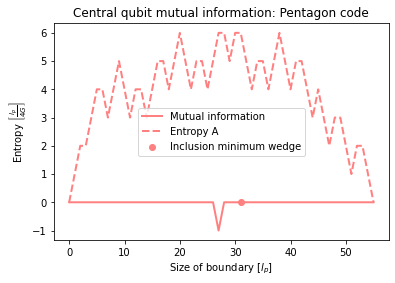}
\endminipage
\caption{Approximations of the RT entropy of an increasing boundary region $A$. In addition, the mutual information between central qubit and $A$  and the moment of inclusion in geometric wedge of the central bulk qubit. On the left a one layer holographic pentagon code with a random boundary and on the right a two layer pentagon code with connected boundary.}
\label{mut2}
\end{figure}
\subsubsection*{Error Corrected Entropy}
Similar to the second part of Section \ref{sim}, it can also be investigated how the entropy of the holographic state changes with boundary size. Since every bulk qubit will now be maximally entangled, the algorithm of the minimum wedge will, for boundary A, add one to the size of entanglement wedge $|\gamma_A^*|$ whenever a bulk qubit is encapsulated. Furthermore, for $A^c$ all bulk qubits encapsulated between $A$ and $\gamma_{A^c}^*$, after finishing the greedy algorithm on the compliment of the boundary, will add one to the size of $|\gamma_{A^c}^*|$. Consequently, the entropy of the boundary that may be calculated with this the wedge size will also change form its RT Formula (\ref{entr3}) equivalent to the Von Neumann equivalent of an error correcting network (\ref{error corected RT}) for which the $S_{bulk} (\rho_{\epsilon_A}) $ term is just the amount of bulk qubits encapsulated by the wedge and boundary. Additionally, the $S_{bulk}$ term of the central qubit mutual information is no longer zero but one. \\ \\
In Figure \ref{mut3} this entropy as a function of an increasing connected boundary region is visible for the one layer pentagon- and the one layer single qubit code. Additionally, the mutual information between the central qubit and the boundary region in function of increasing boundary size is plotted as well. Lastly, the moment of inclusion in the geometric wedge, determined by the minimum entanglement wedge algorithm, is also marked. Here it is visible that for both the pentagon code and the single qubit code, the entropy graphs are no longer symmetrical. Furthermore, the entropy of the whole boundary region will no longer be zero but the total amount of bulk qubits in the tensor network. This is expected as each bulk qubit is maximally entangled with the entire boundary through a series of five-qubit tensors. Most crucially, the mutual information between the central bulk qubit and the boundary no longer remains zero for this corrected entropy. For both the pentagon- and the single qubit code it is visible that mutual information remains at zero until over half of the boundary qubits are included in the region, which is when the boundary is larger than 10 and 12.5 for pentagon- and single qubit code respectively. At this moment, the following happens: firstly, the central bulk qubit is included in the minimum entanglement wedge of $A$. Consequently, the geodesic as a result of the minimum entanglement algorithm on $A$, which is the geometric geodesic $\gamma_A$, and the geometric geodesic as a result of $A$ and the central bulk, $\gamma_{A,bulk}$, will be the same from this point and forward. As the cut of the central bulk does not need to be added in $|\gamma_{A,bulk}|$, it will be smaller than $|\gamma_A|$ by one. This results in the observed mutual information of two.
\begin{figure}[h!]
\minipage{0.50\textwidth}
  \includegraphics[width=\linewidth]{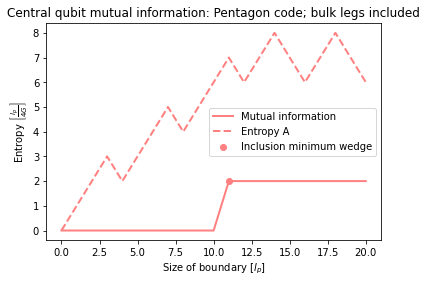}
\endminipage\hfill
\minipage{0.50\textwidth}%
  \includegraphics[width=\linewidth]{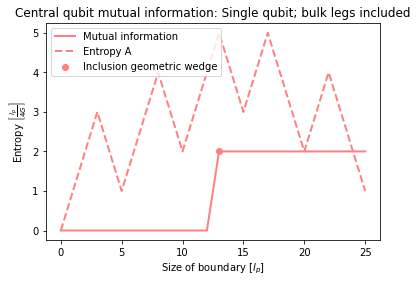}
\endminipage
\caption{The corrected entropy of an increasing connected boundary region $A$ in function of the size of that boundary. In addition, the mutual information between central qubit and $A$  and the moment of inclusion in geometric wedge of the central bulk qubit. On the left the one layer holographic pentagon code and on the right the one layer single qubit code.}
\label{mut3}
\end{figure}
When this happens, for both the pentagon- as the single qubit code, the true geometric wedge, equal to the minimum entanglement wedge, will include the central qubit.
\begin{figure}[h!]
\minipage{0.50\textwidth}
  \includegraphics[width=\linewidth]{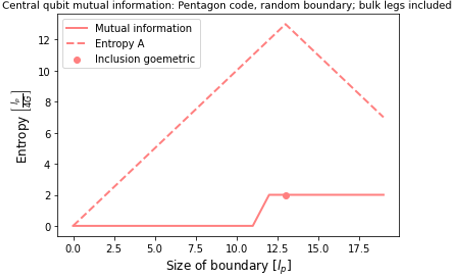}
\endminipage\hfill
\minipage{0.50\textwidth}%
  \includegraphics[width=\linewidth]{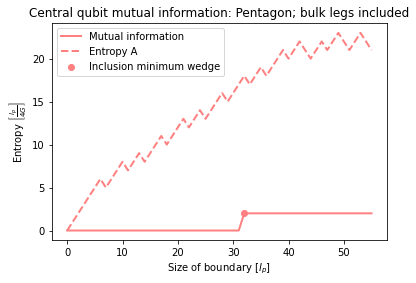}
\endminipage
\caption{Approximations of the corrected entropy of an increasing boundary region $A$. In addition, the mutual information between central qubit and $A$  and the moment of inclusion in geometric wedge of the central bulk qubit. On the left a one layer holographic pentagon code with a random boundary and on the right a two layer pentagon code with connected boundary.}
\label{mut4}
\end{figure}
\\
In Figure \ref{mut4}, similar to the previous section, the same graphs are visible for both random boundary addition and a two layer pentagon code. However, since a random boundary was used, the wedge size only provides an upper bound for the entropy according to Formula (\ref{entr 2}). Moreover, it is not guaranteed that the minimal entanglement wedge finds the true geometric wedge for the two layer pentagon code. Therefore the right most graph in Figure \ref{mut4} is also an approximation. \\ \\Nonetheless, nearly the same behaviour as Figure \ref{mut3} is visible in Figure \ref{mut4}. The graphs will fluctuate and end on the amount of bulk qubits in the system. The randomness on the boundary is visible in the left graph of Figure \ref{mut4}, as the entropy now moves in a straight line. This graphs also shows a different behaviour in mutual information with respect to geometric inclusion. While the mutual information will still rise to be two, it will no longer do it at half the boundary qubits. Moreover the inclusion in the geometric wedge will be even later. However, this mutual information is the result of an approximation of the RT formula, which does not guarantee accurate mutual information. In the next section, the entropy and mutual information for this particular tensor network with the same random boundary will be precisely determined with the explicit Von Neumann entropy using the holographic state. \\
The two layer pentagon code for a connected region does show the same behaviour. The mutual information will rise to two at the same time as over half the boundary qubits are included and the geometric wedge encapsulates the the central bulk qubit at the same time.
\subsubsection*{Mutual Information wedge}
In this last section the viability of the mutual information wedge is discussed. The mutual information wedge that is proposed can be defined as the wedge that includes all bulk qubits where the mutual information between the boundary and the bulk qubit is two.
This wedge has the benefit that it is dependent on the state in the bulk. This is visible in Figure \ref{mut1}, where the mutual information for the central bulk qubit is zero and therefore never included. From the examples of Figure \ref{mut3} and Figure \ref{mut4}, it does seem that there is some correlation between inclusion in the geometric wedge and mutual information. For this corrected entropy, the inclusion of greedy entanglement wedge is associated to rise in mutual information. However, this might be an artifact of the connected boundary subregions. \\
\\
The validity of this wedge will be investigated using the random boundary corresponding to the graph of Figure \ref{mut4} (left). Here, the mutual information will not be calculated using the wedge size but rather the Von Neumann entropy of the holographic state. This allows to determine the entropy precisely for both the random disconnected region $A$ and the union of $A$ and the central bulk qubit. Consequently, the mutual information will also be correct. \\ 
In Figure \ref{final}, this graph of the Von Neumann entropy and mutual information is visible. Because of the difficult nature of these computations, the entropy has only been determined up to a boundary size of twelve. Nonetheless, it is already visible that mutual information rises sooner than the inclusion in the geometric wedge. The rise in mutual information occurs at a boundary size of eight and corresponds for this example to the first boundary where three legs of one of the outer five-qubit tensors is included. However, an investigation of different examples proves that this correspondence does not occur in all examples. Additionally, while the RT upper bound on the entropy of $A$, visible in Figure \ref{mut4}, is reached, the rise in mutual information is sooner than the approximate graph. Therefore, it can be stated that entropy of the union of $A$ and the central bulk qubit does not follow its wedge upper bound.
\begin{figure}[h!]
    \centering
    \includegraphics[scale = 0.62]{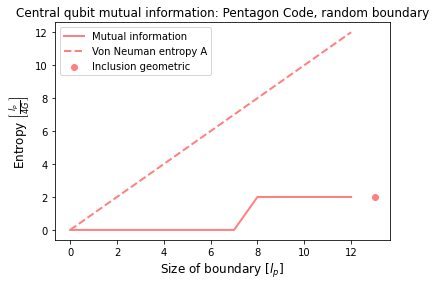}
    \caption{The Von Neumann entropy of an increasing random boundary region $A$ of a one layer pentagon code. In addition, the mutual information between central qubit and $A$  and the moment of inclusion in geometric wedge of the central bulk qubit.}
    \label{final}
\end{figure}
\\
Figure \ref{final} indicates that a rise in mutual information is not a valid indication for the inclusion in the geometric wedge. Moreover, the rise in mutual information is not even associated to the inclusion in the entanglement wedge. While this has been confirmed by explicitly working out the example, it is also visible in the graph directly since the entanglement wedge for a subregion of only eight qubits will never include the bulk qubit. As a result, a mutual information wedge which includes all bulk qubits with mutual information of two is not a valid candidate wedge to assure operator reconstructability. The associated rise in mutual information has only been found in an increasing connected boundary region where the boundary starts on the outer leg of a tensor and moves to first include that tensor. Lastly, no central qubit operator was reconstructable on that random boundary subregion until it was included in the geometric wedge.
\newpage
\section{Conclusion}
The tensor network framework has different wedges for boundary subregions that enclose the bulk qubits for which logical operators are reconstructable on these subregions. These were investigated by performing Monte-Carlo simulations of the boundary erasure and central qubit inclusion for the HaPPY five-qubit toy tensor networks \cite{Happy}. The simulations indicated that, while the causal wedge will not have any detectable threshold probability, the greedy entanglement wedge will have detectable threshold probabilities. The existence of these detectable threshold probabilities were strongly influenced by the structure of the tensor network. This is highlighted by the fact that a threshold probability was found for the single qubit code, but none was detected for the holographic pentagon code. Simulations of tensor networks with a non-uniform boundary showed its limited influence on a threshold probability. As shown by the stacking of tensor networks, the existence of such threshold probability seems to be an inherent property of the structure of the tensor network rather than the fraction $\frac{N_{bulk}}{N_{boundary}}$. Furthermore, the minimum entanglement wedge was introduced as the minimum of the greedy algorithms on the boundary and on the compliment. This showed good results in approximating the true geometric wedge, but lost its guaranteed operator reconstructability. An explicit example of operator reconstruction beyond the greedy entanglement wedge was found using the stabilizer elements of the five-qubit code. Lastly, the possibility of a wedge based on mutual information was rejected. Only for particular connected boundary subregions was the inclusion of the central tensor by the geometric wedge associated to a rise in mutual information.
\clearpage
\newpage
\section*{Acknowledgment}
I would like to thank my family and friends for their continued love and support. Particularly my mother, father and brother for keeping my spirits high and being always there for me. Additionally, I thank my brother-in-arms, Maxime, for the many insightful conversations and discussions.
I would also like to thank my supervisors, Juan and Philip, for without their guidance and wisdom this thesis would not have been possible. I thank my promotor Prof. Dr. Craps and the VUB for allowing me to do this research. 
I thank the authors of the HaPPY paper \cite{Happy}, for their clear and understandable language enabling me to recreate and expand on their research. Finally, I also thank the open source Python libraries \textit{Qiskit} and \textit{TensorNetworks} \cite{qiskit, tensornetwork}. Much of the code used can be found on \url{https://github.com/VicVanderLinden/AdS-CFT_HaPPY}.
\newpage
\addcontentsline{toc}{section}{\textbf{References}}

%\printbibliography%
\end{document}